\definecolor{darkgreen}{rgb}{0,0.5,0}
\definecolor{darkblue}{rgb}{0,0,0.8}
\definecolor{darkred}{rgb}{0.8,0,0}
\newtheorem{definition}{Definition}[section]
\newtheorem{invariant}[definition]{Invariant}
\newtheorem{lemma}[definition]{Lemma}
\newtheorem{theorem}[definition]{Theorem}
\newtheorem{problem}[definition]{Problem}
\newcommand{\bigo}{\mathcal{O}}
\newcommand{\lca}{\mathsf{lca}}
\newcommand{\cP}{{\cal P}}
\newcommand{\cQ}{{\cal Q}}
\newcommand{\MOVE}{\textsf{MOVE}}
\newcommand{\DEL}{\textsf{DEL}}
\newcommand{\procinsert}{\textsf{insert}}
\newcommand{\procdelete}{\textsf{delete}}
\newcommand{\procmove}{\textsf{move}}
\newcommand{\nextrm}{\textrm{next}}
\newcommand{\prevrm}{\textrm{prev}}
\newcommand{\degree}{\textsf{deg}}
 \title{\bf\LARGE All-Pairs LCA in DAGs:\\ Breaking through the $O(n^{2.5})$ barrier\footnotemark[0]\footnotetext[0]{Fabrizio Grandoni is partially supported by the SNF Excellence Grant 200020B 182865/1. Giuseppe F. Italiano is partially supported by MUR, the Italian Ministry for University and Research, under PRIN Project AHeAD (Efficient Algorithms for HArnessing Networked Data). Aleksander Łukasiewicz and  Przemys\l{}aw~Uzna\'nski are Supported by the Polish National Science Centre grant 2019/33/B/ST6/00298.}}
\author[1]{\LARGE Fabrizio Grandoni}
 \author[2]{Giuseppe F. Italiano}
 \author[3]{Aleksander~Łukasiewicz}
 \author[4]{Nikos Parotsidis}
 \author[3]{Przemys\l{}aw~Uzna\'nski}
 \affil[1]{\large IDSIA, Lugano, Switzerland}
 \affil[2]{\large LUISS University, Rome, Italy}
\affil[3]{\large University of Wrocław, Wrocław, Poland}
 \affil[4]{\large Google Research, Zurich, Switzerland}
 \date{}
\begin{document}
\maketitle

% % \setcounter{page}{0}
 \thispagestyle{empty}

% % \setcounter{page}{0}
% \thispagestyle{empty}

%\linenumbers
\begin{abstract}
Let $G=(V,E)$ be an $n$-vertex directed acyclic graph (DAG). A \emph{lowest common ancestor} (LCA) of two vertices $u$ and $v$ is a common ancestor $w$ of $u$ and $v$ such that no descendant of $w$ has the same property.  
In this paper, we consider the problem of computing an LCA, if any, for all pairs of vertices in a DAG. 
The fastest known algorithms for this problem exploit fast matrix multiplication subroutines and have running times ranging from $\bigo( n^{2.687})$ [Bender et al.~SODA'01] down to $\bigo(n^{2.615})$ [Kowaluk and Lingas~ICALP'05] and $\bigo(n^{2.569})$ [Czumaj et al.~TCS'07]. Somewhat surprisingly, all those bounds would still be $\Omega(n^{2.5})$ even if  matrix multiplication could be solved optimally (i.e., $\omega=2$).  
This appears to be an inherent barrier for all the currently known approaches, which raises the natural question on whether one could break through the $\bigo(n^{2.5})$ barrier for this problem.  

In this paper, we answer this question affirmatively: in particular, we  present an   
$\widetilde\bigo(n^{2.447})$ ($\widetilde\bigo(n^{7/3})$ for $\omega=2$) algorithm for finding an LCA for all pairs of vertices in a DAG, which represents the first improvement on the running times for this problem in the last 13 years. 
A key tool in our approach is a fast algorithm to partition the vertex set of the transitive closure of $G$ into a collection of $\bigo(\ell)$ chains and $\bigo(n/\ell)$ antichains, for a given parameter $\ell$. 
As usual, a chain is a path while an antichain is an independent set. 
We then find, for all pairs of vertices, a \emph{candidate} LCA among the chain and antichain vertices, separately. 
The first set is obtained via a reduction to $(\max,\min)$ matrix multiplication. 
The computation of the second set can be reduced to Boolean matrix multiplication similarly to previous results on this problem. 
We finally combine the two solutions together in a careful (non-obvious) manner. 
\end{abstract}

\newpage

\setcounter{page}{1}

\section{Introduction}

Let $G=(V,E)$ be a directed acyclic graph (DAG), with $m$ edges and $n$ vertices. Let $u$ and $v$ be any two vertices in $G$: if there is a path from $u$ to $v$, we say that $u$ is an \emph{ancestor} of $v$ and that $v$ is a \emph{descendant} of $u$.  If $u$ is an ancestor of $v$ and $u\neq v$, we say that $u$ is a \emph{proper ancestor} of $v$
(and $v$ is a \emph{proper descendant} of $u$).  A \emph{lowest common ancestor} (LCA) of $u$ and $v$
 is the lowest (i.e., deepest) vertex $w$ that is an ancestor of both $u$ and $v$, i.e., no proper descendant of $w$ is an ancestor of both $u$ and $v$. In the special case of a tree, the lowest common ancestor of two vertices is always defined and is unique. In a DAG $G$, the existence of an LCA for a pair of vertices is not even guaranteed, and a pair of vertices can have as many as $(n-2)$ LCAs, where $n$ is the total number of vertices in $G$.

In this paper, we consider the problem of computing an LCA for all pairs of vertices in a DAG, which we refer to as the \emph{All-Pairs LCA problem}. This is a fundamental problem and has many important applications, including inheritance in object-oriented programming languages, analysis of genealogical data and modeling the behavior of complex systems in distributed computing  (see, e.g.,~\cite{DBLP:journals/jal/BenderFPSS05,Cottingham,Shaffer} for a list of applications and especially~\cite{DBLP:journals/jal/BenderFPSS05} for further references). 

The All-Pairs LCA problem for DAGs has been investigated in the last two decades, and many algorithms have been presented in the literature (see, e.g.,~\cite{DBLP:journals/jal/BenderFPSS05,DBLP:conf/soda/BenderPSS01,DBLP:journals/tcs/CzumajKL07,DBLP:conf/icalp/KowalukL05,DBLP:conf/esa/KowalukL07,DBLP:conf/swat/KowalukLN08}).
The problem was first considered by Bender et al.~\cite{DBLP:journals/jal/BenderFPSS05,DBLP:conf/soda/BenderPSS01}, who 
proved an  $\Omega(n^\omega)$ lower bound, by giving a
reduction from the transitive closure problem, and presented an algorithm that runs in 
  $\bigo(n^{(\omega+3)/2})$ time, where $\omega$ is the exponent of the fastest known matrix multiplication algorithm. Later on, Kowaluk and Lingas~\cite{DBLP:conf/icalp/KowalukL05}  improved this bound to 
$\bigo(n^{2+1/(4-\omega)})$ by showing that the All-Pairs LCA problem can be reduced to finding maximum witnesses for Boolean matrix multiplication and by providing an efficient solution to the latter problem. The current best bound  for the All-Pairs LCA problem is
%and with fastest being 
 $\bigo(n^{2.5286})$ by Czumaj et al.~\cite{DBLP:journals/tcs/CzumajKL07}. To achieve this bound, they solved the problem of finding maximum witnesses for Boolean matrix multiplication in time $\bigo(n^{2+\lambda})$, where $\lambda$ satisfies the equation $\omega(1,\lambda,1) = 1+2\lambda$. Here $\omega(1,x,1)$ is the exponent of the (rectangular) multiplication of an $n\times n^{x}$ matrix by an $n^{x}\times n$ matrix. The currently best known bound on  $\omega(1,x,1)$ implies a bound of $\bigo(n^{2.5286})$ for the All-Pairs LCA problem.

Somewhat surprisingly, all the currently known bounds for the All-Pairs LCA problem~\cite{DBLP:journals/jal/BenderFPSS05,DBLP:conf/soda/BenderPSS01,DBLP:journals/tcs/CzumajKL07,DBLP:conf/icalp/KowalukL05} would still be $\Omega(n^{2.5})$ even if  matrix multiplication could be solved optimally (i.e., $\omega=2$).  
This appears to be an inherent barrier for all the currently known approaches, which raises the natural question on whether one could break through the $\bigo(n^{2.5})$ barrier for this problem.

\paragraph{Our result.} In this paper we answer this question affirmatively by presenting a new algorithm which runs in time $\widetilde\bigo(n^{2.447})$. This is the first improvement on the running time for this problem in the last 13 years.
To this end we introduce some novel techniques, which differ substantially from previous approaches for the same problem.
In particular, we develop a new technique for covering a DAG $G$ with a small number of chains and antichains, which might also be of independent interest. Here, a chain is just a path in $G$, while an antichain is an independent set (i.e., a subset of vertices such that there is no edge between any two of them). In more detail, given a parameter $\ell \le n$, we show how to partition the vertices of $G$ into at most $\ell$ chains and $2n/\ell$ antichains in time $\bigo(n^2)$. We refer to this as an $(\ell,2n/\ell)$-decomposition of $G$.

We now sketch how to exploit this decomposition in order to compute efficiently all-pairs LCAs. Let $G^T$ be the transitive closure of $G$, which can be computed in time $\bigo(n^{\omega})$. Note that we can solve the All-Pairs LCA problem in $G$ by solving the same problem in $G^T$.
We first find an $(n^x,2{n}^{1-x})$-decomposition of $G^T$ in time $\bigo(n^2)$ for a parameter $x$, $0\leq x\leq 1$, to be fixed later. Next, for each pair of vertices, we find a candidate LCA among the chain and antichain vertices, separately. The first set can be obtained  in time $\widetilde{\bigo}(n^{\frac{\omega(1,x,1) + 2 + x}{2}})$ via a reduction to $(\max,\min)$ matrix multiplication, similarly to Bender et al.~\cite{DBLP:journals/jal/BenderFPSS05,DBLP:conf/soda/BenderPSS01} and to Czumaj et al.~\cite{DBLP:journals/tcs/CzumajKL07}.
The computation of the second set can be reduced to Boolean matrix multiplication in time $\widetilde{\bigo}(n^{1-x+\omega(1,x,1)})$.  
Combining the two solutions is non-trivial and requires some extra care.  
Putting all pieces together yields a total running time of   $\widetilde\bigo(n^{\omega} + n^{\frac{\omega(1,x,1)+2+x}{2}}+n^{1-x + \omega(1,x,1)})$. Balancing the last two terms implies $\omega(1,x,1)=3x$, and thus a running time of
$\widetilde\bigo(n^{\omega} + n^{1+2x})$. Using the equation $\omega(1,x,1)=3x$ to 
 tune the parameter $x$, yields $x = 0.7232761$, and hence a total running time of $\bigo(n^{2.447})$.

We remark that if  matrix multiplication could be solved optimally (i.e., $\omega=2$), several graph algorithms based on fast matrix multiplication would take either time $\widetilde\bigo(n^{2})$ or time $\widetilde\bigo(n^{2.5})$. As it was already mentioned, the previous algorithms by Bender et al.~\cite{DBLP:journals/jal/BenderFPSS05,DBLP:conf/soda/BenderPSS01}, by Kowaluk and Lingas~\cite{DBLP:conf/icalp/KowalukL05} and by Czumaj et al.~\cite{DBLP:journals/tcs/CzumajKL07} would all take time $\widetilde\bigo(n^{2.5})$. On the other side, under the same assumption, 
the running time of our algorithm would be $\widetilde\bigo(n^{2+\frac{1}{3}})$.  Thus, our improvement 
 suggests a possible separation between the All-Pairs LCA and the minimum / maximum witness for Boolean matrix multiplication (used by Czumaj et al.~\cite{DBLP:journals/tcs/CzumajKL07} as a reduction in their algorithm for All-Pairs LCA).

\paragraph{Related work.}

The problem of finding LCAs in trees was first  introduced by Aho et al.~\cite{DBLP:conf/stoc/AhoHU73}. 
The first optimal (linear preprocessing and $\bigo(1)$ time per query) solution to this problem was presented by Harel and Tarjan \cite{DBLP:journals/siamcomp/HarelT84}, although with a sophisticated data structure which is not practical. 
The first simple, near-optimal algorithm for LCAs in trees was introduced by Bender and Farach-Colton \cite{DBLP:conf/latin/BenderF00}. 
We remark that LCA problem in trees exemplifies a rather different structure than in DAGs.

Matrix multiplication is a fundamental problem, with a long line of algebraic approaches, with recent results by Stothers \cite{stothers2010complexity}, Vassilevska-Williams \cite{DBLP:conf/stoc/Williams12}, Le Gall \cite{DBLP:conf/issac/Gall14}, and finally \cite{DBLP:journals/corr/abs-2010-05846} which yielded $\omega < 2.37286$. 
There are known faster (under the assumption that $\omega>2$) algorithms for rectangular matrix multiplication, with current best bounds by Le Gall and Urrutia \cite{DBLP:conf/soda/GallU18}.
There is a long list of problems which are equivalent to matrix multiplication e.g., Boolean matrix multiplication witnesses with $\widetilde\bigo(n^\omega)$ \cite{DBLP:conf/focs/AlonGMN92} and All-Pairs Shortest Paths (APSP) in undirected unweighted graphs \cite{DBLP:journals/jcss/AlonGM97}.

There is a large family of graph and geometric problems for which the best known algorithms use matrix-multiplication and their complexity is between $n^\omega$ and $n^3$. Such problems include  All-Pair Bottleneck Paths (APBP): \cite{DBLP:conf/soda/DuanP09,DBLP:conf/stoc/VassilevskaWY07}, vertex APBP \cite{DBLP:conf/soda/ShapiraYZ07}, unweighted directed APSP \cite{DBLP:journals/jacm/Zwick02}, All-Pair Nondecreasing Paths \cite{DBLP:conf/icalp/DuanGZ18,DBLP:conf/icalp/DuanJW19,DBLP:journals/talg/Williams10}, and Dominance-, Hamming- and $L_1$- matrix products: \cite{DBLP:conf/icalp/IndykLLP04,DBLP:conf/cocoon/MinKZ09,DBLP:conf/soda/Yuster09}. Interestingly, for all the aforementioned problems, the best known algorithms would be of complexity $\widetilde\bigo(n^{2.5})$ if $\omega=2$.
 For fine-grained complexity of intermediate complexity problems and the relations between them, see recent results \cite{DBLP:journals/corr/abs-1911-06132,DBLP:conf/stoc/BringmannKW19,DBLP:conf/soda/DurajK0W20,DBLP:conf/innovations/Lincoln0W20}.

For other results on All-Pairs LCA, see \cite{DBLP:conf/esa/KowalukL07} on finding unique LCA. Other work in the area include~\cite{DBLP:journals/tcs/DashSHC13}.

For related problems of minimal witnesses of Boolean matrix multiplication, we refer to an algorithm for sparse matrices \cite{DBLP:journals/algorithmica/CohenY14}, and to a recent quantum algorithm \cite{DBLP:journals/corr/abs-2004-14064}. 
We also refer to \cite{DBLP:conf/swat/KowalukLN08} which introduced path covering techniques in the All-Pairs LCA problem. 
The authors observe that covering a DAG with a small number of paths might lead to faster algorithms, which is one of the key observations used in our algorithm. However, this observation alone does not imply a faster~algorithm.

The decomposition of a partially ordered set into disjoint chains and antichains can be seen as a special case of finding a \emph{cocoloring} of a graph. 
A \emph{cocoloring} of a graph is a partition of its vertices into cliques and independent sets.
The \emph{cochromatic number} of a graph is the cardinality of the smallest cocoloring.
This problem has been originally studied by Lesniak and Straight \cite{lesniak1977cochromatic}.
The special case of partitioning a sequence into monotonic subsequences has received considerable attention \cite{DBLP:journals/acta/Bar-YehudaF98, DBLP:journals/eik/BrandstadtK86, DBLP:journals/jgt/ErdosGK91, DBLP:journals/ipl/FominKN02, DBLP:journals/eik/Wagner84, DBLP:conf/tamc/YangCLZ07}, since 
it has many applications, including  book embeddings \cite{DBLP:journals/acta/Bar-YehudaF98}, and geometric algorithms \cite{DBLP:conf/isaac/ArroyueloCDDHLMNSS09, DBLP:journals/acta/Bar-YehudaF98, DBLP:conf/spire/ClaudeMN10}.

\section{Preliminaries}

Let  $G=(V,E)$ be a DAG, with $m$ edges and $n$ vertices. Without loss of generality we assume that $G$ is weakly connected (hence, $m\geq n-1$). If $(u,v) \in E(G)$ we say that $u$ is a \emph{parent} of $v$ and $v$ is a child of $u$. If there is a path from $u$ to $v$ in $G$ we say that $u$ is an \emph{ancestor} of $v$ and that $v$ is a \emph{descendant} of $u$.  If $u$ is an ancestor of $v$ and $u\neq v$, we say that $u$ is a \emph{proper ancestor} of $v$ (and $v$ is a \emph{proper descendant} of $u$).  A \emph{lowest common ancestor} (LCA) of $u$ and $v$
 is the lowest (i.e., deepest) vertex $w$ that is an ancestor of both $u$ and $v$, i.e., no proper descendant of $w$ is an ancestor of both $u$ and $v$. We use $LCA(u,v)$ to denote the set of LCAs of $u$ and $v$. In case there is no common ancestor of $u$ and $v$, $LCA(u,v)=\emptyset$. In this paper, we consider the following~problem.

\begin{problem}[All-Pairs LCA]
Let $G = (V,E)$ be a DAG. Compute a lowest common ancestor for all pairs of vertices $u,v \in V$.
\end{problem}

\paragraph*{Matrix multiplication.}
We use  $\textrm{MM}(X,Y,Z)$ to denote the time complexity of multiplying two matrices of dimensions $X \times Y$ and $Y \times Z$ respectively. We denote by $\omega$ the exponent of the fastest known matrix multiplication algorithm, 
i.e., $\textrm{MM}(n,n,n) = \bigo(n^\omega)$. The current best bound for $\omega$ is $\omega < 2.3728639$ \cite{DBLP:conf/issac/Gall14}.
We denote by $\omega(a,b,c)$ the rectangular matrix multiplication exponent, i.e.,  $\textrm{MM}(n^a,n^b,n^c) = \bigo(n^{\omega(a,b,c)})$. The following is a standard bound derived from reducing rectangular matrix multiplication to square matrix multiplication:
\begin{equation}
\label{eq:boundsquare}
\omega(1,x,1) \le 2 + x (\omega-2)\qquad\textrm{ for } 0 \le x \le 1.
\end{equation}
We introduce the following definition:
\begin{definition}
Let $\alpha>0.31389$ be the maximum value satisfying $\omega(1,\alpha,1) = 2$, and let $\beta = \frac{\omega-2}{1-\alpha}$.
\end{definition}
The following bound holds:
\begin{equation}
\label{eq:boundrect}
\omega(1,x,1) \le \begin{cases}2 + \beta(x-\alpha) \textrm{ when } \alpha \le x \le 1,\\ 2 \textrm{ when }0 \le x \le \alpha.\end{cases}
\end{equation}

We remark that there are better bounds on $\omega(1,x,1)$ (see, e.g.,  \cite{DBLP:conf/soda/GallU18}). In particular, the  following bound is known:
\begin{equation}
\label{eq:legallurrutia}
\omega(1,x,1) \le 1.690383 + 0.66288 \cdot x\textrm{ for } 0.7 \le x \le 0.75
\end{equation}

%\onote{This last bound is obtained by linear interpolation of bounds by Le Gall. I think that the rounding of the interpolation might be wrong and this numbers should actually be bigger on fifth-sixth decimal place. This is probably of minor importance, tough.}
%\paragraph*{Algorithm complexity.} 
We now sketch how all those bounds influence the $\widetilde\bigo(n^{\omega} + n^{1+2x})$ running time of our algorithm.
If we simply use square matrix multiplication as a subroutine to implement rectangular matrix multiplication (i.e., the bound in \eqref{eq:boundsquare}), combining this with equation $3x = \omega(1, x, 1)$, we obtain $x = \frac{2}{5-\omega}$. %and $\gamma=\frac{9-\omega}{5-\omega}$. %Notice that $\frac{9-\omega}{5-\omega}> \omega$ since $\omega< 3$. 
In this case, the running time of our algorithm would be $\widetilde{\bigo}(n^{\frac{9-\omega}{5-\omega}})\in \bigo(n^{2.522571})$, which is already an improvement over the algorithm by Czumaj et al.~\cite{DBLP:journals/tcs/CzumajKL07}. 
This bound can be further improved by using more sophisticated 
 rectangular matrix multiplication algorithms. In particular, using the bound in \eqref{eq:boundrect}, we get $x = \frac{2-\omega\alpha}{5-\omega-3\alpha}$ 
% $\gamma = \frac{2(2-\omega\alpha)}{5-3\alpha-\omega}+1$, 
and the running time of our algorithm becomes $\bigo(n^{2.489418})$, which means breaking through the $\bigo(n^{2.5})$ barrier. By applying
%the bound 
\eqref{eq:legallurrutia}, we finally get $x = 0.7232758$,
% and $\gamma \le 2.4465522$,  
which yields our claimed running time of $\bigo(n^{2.4465522})$.

\paragraph*{Max-min matrix product.}
We exploit fast algorithms for the max-min matrix  product. In more detail, let $A$ be an $n\times p$ matrix and $B$ be a $p\times n$ matrix. 
The entries of $A$ and $B$ are assumed to come from $\mathbb{Z} \cup \{-\infty\}$. 
The max-min product $C=A \ovee B$ is specified by $C[i,j] = \max_k \min \{A[i,k],B[k,j]\}$. 
A simple modification of the algorithm and analysis in \cite{DBLP:conf/soda/DuanP09} implies the following complexity (for which we provide a short proof in Section \ref{sec:recangular-max-min-product} for completeness).

\begin{restatable}{thmm}{thmmaxminproduct}\textsc{(Corollary of \cite{DBLP:conf/soda/DuanP09})}\label{theorem:max-min-product-bound}
%\begin{theorem}
\label{th:rectangularminmax}
If $A$ and $B$ are respectively $n \times p$ and $p \times n$ matrices, then the $A \ovee B$ product can be computed in time $\widetilde\bigo(\sqrt{\textrm{MM}(n,p,n) \cdot n^2 p}\,)$.
%\end{theorem}
\end{restatable}

\section{Fast Chain-Antichain Decomposition}
\label{sec:chainantichain}

Let $G=(V,E)$ be a DAG, with $n$ vertices and $m$ edges.
A \emph{chain} (of size $k$) of $G$ is a subset of vertices $\{c_1,\ldots,c_k\}$ such that $c_1,\ldots,c_k$ is a directed path in $G$. An \emph{antichain} (of size $k$) of $G$ is a subset of vertices $\{a_1,\ldots,a_k\}$ such that there is no edge between them. Observe that if $G$ is the graph induced by a partial order, then our definitions coincide with the usual definitions of chain and antichain in a partially ordered set.
\begin{definition}
\label{def:chain-antichain-decomposition}
An $(a,b)$-decomposition of a DAG $G$ consists of a collection  ${\cal P}$ of chains of $G$, $|{\cal P}|\leq a$, and a collection ${\cal Q}$ of antichains of $G$, $|{\cal Q}| \leq b$, that together span all the vertices of $G$.
\end{definition}

One could find an  $(\ell,n/\ell)$-decomposition with a simple greedy method, as follows:  find and remove the longest chain in $G$, for $\ell$ times in total. Next, cover whatever remains with $n/\ell$ antichains (since the remaining graph has depth at most $n/\ell$). Such an  algorithm takes time $\bigo(m \ell)$ in total ($\bigo(n^2 \ell)$ for dense graphs), since finding ``naively''  the longest chain in a DAG can be accomplished by a single graph traversal. Unfortunately, this running time would be too slow for our purposes.

We next present a faster algorithm for computing an $(\ell,\frac{2n}{\ell})$-decomposition of a DAG $G$, as stated in the following theorem, which will be proven in this section.

\begin{algorithm}[t]
\caption{Compute chain/antichain decomposition}\label{alg:decomposition}
% 	\\
\textbf{Input:} DAG $G$ with the topological ordering $v_1, \ldots, v_n$ of its vertices and parameter $\ell$, with $h=\lceil n/\ell\rceil$. \\
\textbf{Output:} $(\ell, \frac{2n}{\ell})$-decomposition of $G$.
% 	\vspace*{-5pt}
% 	\noindent\rule{\linewidth}{0.2pt}
	
	\begin{algorithmic}[1]
    \STATE Initialize $G' \gets \emptyset$, $\cal P \gets \emptyset$, $\cal Q \gets \emptyset$, $L'_i \gets \emptyset$ for $1 \leq i \leq h$  \;
	\FOR{$t = 1, \ldots, n$}
	    \STATE $\MOVE \gets \emptyset$, $\DEL \gets \emptyset$ \;
	    \STATE $\procinsert(v_t)$ \;
	    
	    \IF {$|L'_{h(v_t)}| \geq \ell$}
	        \STATE Add antichain $L'_{h(v_t)}$ to $\cal Q$ and all its vertices to~$\DEL$ \;

	    \ELSIF {$h(v_t) = h$}
	        \STATE $u \gets v_t$ \;
	        \STATE $C_t \gets \{u\}$ \;
	        \WHILE {$L_{\nextrm}(u) \neq \emptyset$}
	            \STATE $u \gets $ any element of $L_{\nextrm}(u)$\;
	            \STATE Add $u$ to $C_t$ \;
	        \ENDWHILE
	        \STATE Add chain $C_t$ to $\cal P$ and all its vertices to $\DEL$ \;
	    \ENDIF 
	    \;

	    \WHILE{$\DEL \cup \MOVE \neq \emptyset$}
	        \IF {$\DEL \neq \emptyset$}
	            \STATE Extract $w$ from $\DEL$ and execute $\procdelete(w)$\;
	        \ELSE
	            \STATE Extract $w$ from $\MOVE$ and execute $move(w)$\;
	            \IF {$|L'_{h(w)}| \geq \ell$}
	                \STATE Add antichain $L'_{h(w)}$ to $\cal Q$ and all its {vertices} to $\DEL$\;
	            \ENDIF
	        \ENDIF
	    \ENDWHILE
	\ENDFOR
	\RETURN $(\cal P, \cal Q)$
    \end{algorithmic}
\end{algorithm}

\begin{theorem}
\label{thr:mainDecomposition}
Let $G=(V,E)$ be a DAG with $n$ vertices, and let $\ell\in [1,n]$ be an integer parameter. There exists an $\bigo(n^2)$ time deterministic algorithm to compute
an $(\ell,\frac{2n}{\ell})$-decomposition of $G$.
\end{theorem}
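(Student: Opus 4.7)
The plan is to reason about Algorithm~\ref{alg:decomposition} in two stages: first proving it produces a valid decomposition, then bounding its running time.

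For correctness, the central invariant I would maintain is that at every point during execution, the sets $L'_1, \ldots, L'_h$ form a partition of the currently-alive vertices, and for every alive $v$, $h(v) = 1 + \max\{h(u) : u \text{ alive}, (u,v)\in E\}$, interpreting the $\max$ over the empty set as $0$. The operations \textsf{insert}, \textsf{delete}, and \textsf{move} preserve this invariant by construction: \textsf{insert} assigns a new vertex its correct level from its already-inserted predecessors, \textsf{delete} removes a vertex outright, and \textsf{move} is triggered precisely when a vertex's level has become too high due to a deletion below and recomputes it. From the invariant it follows immediately that each $L'_i$ is an antichain, since an edge between two alive vertices would force their levels to differ. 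Similarly, when $v_t$ enters level $h$ in the chain branch, the invariant forces the existence of an alive predecessor at each of the levels $h-1, h-2, \ldots, 1$, so following predecessor pointers yields a genuine chain in $G$ of length exactly $h = \lceil n/\ell \rceil$. Counting: each extracted chain spans $\ge n/\ell$ vertices, so $|{\cal P}| \leq \ell$; each ``overflow'' antichain consumes at least $\ell$ vertices, giving $\le n/\ell$ of them, and appending the (at most $h = n/\ell$) residual nonempty levels at termination contributes at most $n/\ell$ more antichains, for a total $|{\cal Q}| \leq 2n/\ell$.

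For the running time, I would first observe the monotonicity fact that, once $v_t$ is inserted, its level can only decrease afterwards: topological processing forbids the arrival of new predecessors, while deletions and moves only shrink the set of alive in-neighbors. Since $h(v_t) \in \{1, \ldots, h\}$, each vertex participates in at most $h$ \textsf{move} events, bounding the total number of moves by $\bigo(nh) = \bigo(n^2/\ell)$. I would then augment the data structures with a counter $c_i[v]$ storing, for each alive $v$ and each level $i$, the number of alive in-neighbors of $v$ at level $i$, along with level-bucketed in-neighbor lists; with these, \textsf{insert}$(v_t)$ takes $\bigo(\deg^-(v_t))$ time, \textsf{delete}$(w)$ takes $\bigo(\deg^+(w) + \deg^-(w))$, and the level-recomputation part of \textsf{move}$(w)$ takes $\bigo(h)$ time by scanning its counters downward from $h(w)$.

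The chief obstacle will be bounding the cumulative propagation work of \textsf{move} so as to reach $\bigo(n^2)$ rather than the naive $\bigo(mh) = \bigo(n^3/\ell)$ that comes from charging $\bigo(\deg^+)$ per move. The charging scheme I would pursue is this: when $v$ leaves level $i$ via a move or a delete, the only out-neighbors whose level can actually be affected are those currently sitting at level $i+1$, and the algorithm maintains the invariant $|L'_{i+1}| < \ell$ throughout the inner loop (overflow sets are extracted immediately as antichains). Hence the effective propagation work per move event is $\bigo(\ell)$, yielding $\bigo(nh) \cdot \bigo(\ell) = \bigo(n^2)$ in total. Combined with $\bigo(n^2)$ bookkeeping for the initial topological sort, insertions and deletions, and the emission of residual levels at the end, this delivers the claimed $\bigo(n^2)$ bound.
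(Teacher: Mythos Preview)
Your correctness outline is essentially right, with one wording issue: the strong invariant you state ($h(v)=1+\max\{h(u):u\text{ alive},(u,v)\in E\}$) is \emph{not} maintained at every instant; it can fail between a deletion and the moves that repair it. What is maintained throughout is the weaker property that every edge between alive vertices goes strictly downward in level (the paper's Invariant~\ref{inv:update}(3)), and that already forces each $L'_i$ to be an antichain. The strong invariant is only needed at the moment a chain is extracted, which is immediately after an \textsf{insert} with $\DEL=\MOVE=\emptyset$, and there it does hold (Invariant~\ref{inv:update}(5)). So the argument goes through, but the invariant should be stated as holding only in the stable state.

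The running-time argument has a real gap. You introduce counters $c_i[v]$ for all levels $i$ and rely on them to do level recomputation in $\bigo(h)$ by scanning downward. But maintaining those counters is precisely the $\bigo(mh)$ cost you flag as the obstacle: when $u$ leaves level $i$ (by move or delete), \emph{every} out-neighbor $w$ of $u$---not just those in $L'_{i+1}$---needs $c_i[w]$ decremented and, for a move, $c_j[w]$ incremented. Your $\bigo(\ell)$ charge per event correctly bounds the fan-out to children in $L'_{i+1}$ (to enqueue them into $\MOVE$), but it does not cover the counter updates for out-neighbors at levels $>i+1$; skip those and the counters go stale, so the downward scan lands at the wrong level.

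The paper sidesteps this by \emph{not} keeping counters at all levels. It keeps only $L_{\nextrm}(v)$, the list of alive parents at level $h(v)-1$; when that list empties, $v$ is queued to move. Level recomputation for $v$ then walks levels $h(v)-2,h(v)-3,\ldots$ and, at each level $j$, tests every vertex of $L'_j$ (at most $\ell$ of them) for adjacency with $v$ in $\bigo(1)$ via the adjacency matrix. A move dropping from $i$ to $j$ thus costs $\bigo((i-j)\ell)$, and summing over all moves of a fixed $v$ telescopes to $\bigo(h\ell)=\bigo(n)$, giving $\bigo(n^2)$ in total. So the $|L'_j|\le\ell$ bound is indeed the key, but it is used to bound the per-level scan during recomputation, not (only) the propagation fan-out; and the $\bigo(1)$ adjacency test via the matrix replaces your counter lookup.
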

We assume that $G$ is represented via an adjacency matrix (otherwise, we can construct it in $\bigo(n^2)$ time). The high level idea is as follows. Let $v_1,\ldots,v_n$ be a topological ordering of the vertices of $G$ (which can be computed in $\bigo(n^2)$ time). 
Let  $V_i=\{v_1,\ldots,v_i\}$, $1\leq i\leq n$, denote the first $i$ vertices in the topological order. The algorithm consists of $(n+1)$ iterations. 
At the beginning of iteration $t\geq 1$ we are given an input graph $G_{t-1}=G[W_{t-1}]$ induced in $G$ by a set of vertices $W_{t-1}\subseteq V_{t-1}$. Initially $G_0$ is the empty graph (and $W_0=\emptyset$).
For $1\leq t\leq n$, graph $G_t$ is obtained from $G_{t-1}$ as follows. We first add vertex $v_t$. Then we remove (and add to our decomposition) possibly one chain of size at least  $n/\ell$ and possibly some antichains of size $\ell$ each.
After iteration $n$, there is a final special iteration $n+1$ where $G_n$ is decomposed into at most $n/\ell$ antichains which are added to our decomposition. 
Clearly, this process produces at most $\ell$ chains and at most $2n/\ell$ antichains, as required.

As mentioned earlier, during a given iteration we insert and remove sets of vertices in a form of chains and antichains. 
We let $G'=G[W']$ denote the current graph. The set of vertices that are present in $G'$ is implicitly maintained by using a Boolean vector that indicates the existence of a vertex in $W'$. Since each vertex is added and removed from $G'$ at most once, the maintenance of this vector takes total time $\bigo(n)$. 

Furthermore, we let $L'_1,\ldots,L'_h$, $h:=\lceil n/\ell \rceil$, be disjoint (initially empty) sets of vertices that we will call \emph{layers}. We say that a vertex $v\in L'_i$ is at \emph{level} $i$. Intuitively, the level of vertex $v \in G'$ will be the length of the longest chain ending at $v$. This will immediately imply that all layers will form antichains.
During the execution of our algorithm it holds that $|L'_i| \leq \ell$, $1 \leq  i \leq h$, at all times: as a consequence,
%From a high level perspective, 
whenever at any point during the execution of the algorithm,  we identify $|L'_i| = \ell$ for some set $L'_i$, we can remove  from $W'$ the vertices in $L'_i$ since they form an antichain of size $\ell$.
We say that the algorithm is in a \emph{stable state} when the set $W'$ is partitioned into the sets $L'_{i}, 1\leq i \leq h$, such that each vertex $v\in L'_i$, for $i\geq 2$, has a parent $u\in L'_{i-1}$. 
Therefore, once we have that $L'_h \not= \emptyset$ during a stable state of the algorithm, it can be seen (as we will show later) that starting from a vertex $v\in L'_h$ and following any path by traversing a parent of each visited vertex produces (the reverse of) a chain of $G'$ of length exactly $h$.
After the removal of some set of vertices (either a chain or antichain) from $G'$, the algorithm might enter into an \emph{unstable state} (i.e., not a stable state), and hence our algorithm will work to restore a stable state by suitably  modifying the partitioning of $W'$ into the sets $L'_{1}, \dots , L'_{h}$. We next give the low level details of our algorithm.

We maintain all sets $L'_i$ into an array of size $h$ of doubly-linked lists. We will guarantee that each $v\in G'$ is contained in precisely one such set $L'_i$, and maintain bi-directional pointers between the corresponding two copies of $v$. We also maintain the sizes $|L'_i|$, and maintain the following quantities for each vertex $v\in G'$:
\begin{itemize}\itemsep0pt
\item the level $h(v)$ of $v$;
\item a list $L_{\nextrm}(v)$ of pointers to parents of $v$ in $L_{h(v)-1}$ ($L_{\nextrm}(v)=\emptyset$ for $h(v)=1$). 
%Notice that by the invariant such list has size at most $\ell$; 
\item a list $L_{\prevrm}(v)$ of pointers to children of $v$ in $L_{h(v)+1}$.
\end{itemize}

The lists $L_{\nextrm}$ and $L_{\prevrm}$ are used to assist fast insertions (resp., deletions) of vertices to (resp., from) a list $L'_i$. 
%Essentially, when a vertex $v$ is deleted from a list $L'_i, i\geq 2$, we need to update the parents of $v$ in  $L_{h(v)-1}$  (i.e., list $L_{\nextrm}(v)$) and add $v$ as the parent of its children in $L'_{h(v)+1}$ (i.e., their lists $L_{\nextrm}$). 
%This is needed so that we can quickly check whether a vertex $v$ has a parent in $L'_{h(v)-1}$ as vertices get inserted to and deleted from $L'_{h(v)+1}$.
We note that the lists $L_{\prevrm}$ are not required for the correctness of the algorithm, but only for efficiency reasons. %In particular, we use the list $L_{\prevrm}(v)$ to pinpoint the children $w$ of $v$ in $L'_{h(v)+1}$ whose lists $L_{\nextrm}(w)$ need to be updated. 
In order to be able to quickly update lists $L_{\prevrm}(v)$ and $L_{\nextrm}(v)$ after we delete or move a vertex we store together with each entry $w\in L_{\prevrm}(v)$ a pointer to the occurrence of $v$ in the list $L_{\nextrm}(w)$. 
We also store with each entry $v\in L_{\nextrm}(w)$ a pointer to the occurrence of $w$ in the list $L_{\prevrm}(v)$. 
This way, for some vertex $w\in L_{\prevrm}(v)$ we can remove $v$ from $L_{\nextrm}(w)$ in constant time, and vice versa.
For the sake of simplifying the presentation, these pointers are updated implicitly and we assume that we can execute the relevant insertions and removals in constant time.

\begin{algorithm}[t]
	\floatname{algorithm}{Procedure}
	\label{alg:insert}
	\caption{$\procinsert(v)$: Insert vertex $v$} 
	%\DontPrintSemicolon
	%\SetKw{Break}{break}
	
	%\SetKwProg{procedure}{Procedure}{}{}
	
	\begin{algorithmic}[1]
	    \STATE $L_{\prevrm}(v) \gets \emptyset$, $L_{\nextrm}(v) \gets \emptyset$ \;
	    \FOR{$i = h-1, h-2, \ldots, 1$}
	        \FOR{\textbf{each} $u \in L'_i$}
	            \IF{$(u, v) \in E(G')$}
	                \STATE Add $u$ to $L_{\nextrm}(v)$\;
	            \ENDIF
	        \ENDFOR
	        
	        \IF{$L_{\nextrm}(v) \neq \emptyset$}
	            \STATE Add $v$ to $L'_{i+1}$, increment $|L'_{i+1}|$, and set $h(v) \gets i+1$ \;
	            \FOR{$u \in L_{\nextrm}(v)$}
	                \STATE Add $v$ to $L_{\prevrm}(u)$\;
	            \ENDFOR
	            \RETURN
	        \ENDIF
	    \ENDFOR
	   \STATE Add $v$ to $L'_1$, increment $|L'_1|$, and set $h(v) \gets 1$\;	        
	\end{algorithmic}
\end{algorithm}

During each iteration $t\leq n$ we perform three main operations: 
\begin{itemize}
    \item $\procinsert(v)$: adds vertex $v$ to $G'$. This is applied once to $v_t$ at the beginning of iteration $t$.
    \item $\procdelete(v)$: deletes vertex $v$ from $G'$. This is used to remove chains and antichains from $G'$.
    \item $\procmove(v)$: moves $v$ from some $L'_i$ to some $L'_j$, $j<i$. This is used to modify the assignment of the vertices of $G'$ to the layers $L'_1,\dots, L'_{h}$ in order to restore a stable state of the algorithm.
\end{itemize}

The latter two operations can be performed multiple times in each iteration. 
%
%\pnote{The following paragraph is a repetition, so it should be deleted}
%We will preserve the following invariant: 
%As mentioned earlier, during a given iteration we insert and remove vertices. 
%We let $G'=G[W']$ denote the current graph. $G'$ is implicitly maintained by using a Boolean vector to store the current values of its vertex set $W'$. Since each vertex is added and removed from $G'$ at most once, this takes total time $\bigo(n)$. 
%Furthermore, we let $L'_1,\ldots,L'_h$, $h=\lceil n/\ell \rceil$, be disjoint (initially empty) sets of vertices that we will call \emph{layers}. If vertex $v\in L'_i$, we say that $v$ is at \emph{level} $i$. 
%
Throughout, we will maintain the following invariant: 
\begin{invariant}\label{inv:update}
After each execution of $\procinsert()$, $\procdelete()$ or $\procmove()$, the following holds:
\begin{enumerate}\itemsep0pt
\item Each vertex $v\in G'$ belongs to one $L'_i$ and, right before an $\procinsert()$ (or after the last iteration), $L'_h=\emptyset$;
\item Each $L'_i$ has size at most $\ell$, and size at most $\ell-1$ right before an $\procinsert()$ or $\procmove()$.
\item For each $v\in G'$, each parent $w\in G'$ of $v$ belongs to some lower layer $L'_j$, $j<h(v)$.
\item There is no edge $(u,v)$ for $u$ and $v$ belonging to the same layer $L'_{i}$.
\item Right before an $\procinsert()$ each vertex $v\in L'_i$, for $i\geq 2$, has a parent $u\in L'_{i-1}$ (i.e., the algorithm is at a stable state).
\end{enumerate}
\end{invariant}

\begin{algorithm}[t]
	\floatname{algorithm}{Procedure}
	\label{alg:delete}
	\caption{$\procdelete(v)$: Delete vertex $v$}
	%\DontPrintSemicolon
	%\SetKw{Break}{break}
	
	%\SetKwProg{procedure}{Procedure}{}{}
	%\procedure{\procdelete(v)}{
	\begin{algorithmic}[1]
	    \STATE Remove $v$ from $G'$ and from $L'_{h(v)}$, decrement~$|L'_{h(v)}|$ \;
	    
        \FOR{\textbf{each} $u \in L_{next}(v)$}
            \STATE Remove $v$ from $L_{\prevrm}(u)$ and $u$ from $L_{\nextrm}(v)$\;
        \ENDFOR

	    \FOR{\textbf{each} $w\in L_{\prevrm}(v)$}
	        \STATE Remove $v$ from $L_{\nextrm}(w)$ and $w$ from $L_{\prevrm}(v)$\;
	        \IF{$L_{\nextrm}(w)=\emptyset$}
	          \STATE Add $w$ to $\MOVE$\;
	        \ENDIF
	    \ENDFOR
\end{algorithmic}

\end{algorithm}

We next describe in more detail a given iteration $t\leq n$, modulo a detailed  description of the operations $\procinsert()$, $\procdelete()$ and $\procmove()$ which will be given later. We create two empty lists $\DEL$ and $\MOVE$. Intuitively, $\DEL$ contains vertices that have to be deleted from $G'$, while $\MOVE$ contains vertices that need to be moved to a lower layer (unless they are deleted earlier) in order to restore a stable state of the algorithm. 
Initially we execute $\procinsert(v_t)$. This way we add $v_t$ to $G'$  to some $L'_i$. Then we add some vertices to $\DEL$ if one of the following two cases happens: (a) $v_t\in L'_h$ or (b) $v_t\in L'_i$ and $|L'_i|=\ell$. In case (a) we compute a set $C_t$ iteratively as follows. Initially $u=v_t$. We add $u$ to $C_t$, then update $u$ to any vertex in $L_{\nextrm}(u)$ and iterate. We halt when $L_{\nextrm}(u)=\emptyset$. $C_t$ is added to the set ${\cal P}$ of chains in the decomposition under construction and its vertices are added to $\DEL$.
Notice that by Invariant \ref{inv:update} the algorithm is at a stable state right before the $\procinsert()$ operations is executed.
We will later show that, indeed, $C_t$ is a chain in $G'$ of size precisely $h$.
 In case (b) we add $L'_i$ (interpreted as a set of vertices) to the set of antichains ${\cal Q}$ in the decomposition and its vertices to $\DEL$. 
By Invariant \ref{inv:update}, there is no edge between any two vertices in $L'_i$, and thus, is an antichain in $G'$ of size precisely $\ell$.

Now we perform the following steps while $\DEL\cup \MOVE\neq \emptyset$\footnote{Notice that if cases (a) and (b) above do not happen, we stop at this point.}. 
If $\DEL\neq \emptyset$, we extract $v$ from $\DEL$ and call $\procdelete(v)$. 
This procedure removes $v$ from $G'$ and from the corresponding layer $L'_i$, and it might add some vertices to $\MOVE$. 
In particular, if $v$ used to be the only parent of $w$, then $w$ is added to $\MOVE$. 
Notice that at that point the algorithm is in an unstable state and cannot return to a stable state before all vertices in $\MOVE$ are re-assigned to appropriate layers.

Otherwise (i.e. $\DEL=\emptyset$), we extract $v$ from $\MOVE$ and, if $v\in G'$ (i.e., $v$ was not deleted in some previous step), we call $\procmove(v)$. 
This procedure will move $v$ from its current layer $L'_i$ to some lower layer $L'_j$, $j<i$. 
If after this step it happens that $|L'_j|=\ell$, then $L'_j$ is added to ${\cal Q}$ and its vertices are added to $\DEL$.
Again, by Invariant \ref{inv:update}, there is no edge between any two vertices in $L'_j$, and thus, it is an antichain in $G'$ of size $\ell$.

Procedure $\procinsert(v)$ works as follows. We consider the layers $j=h-1,\ldots,1$ in this order, and check whether $v$ has some parent in $L'_j$. 
Notice that all parents of $v$ must have been inserted at some previous iteration, however they might not belong to $G'$ any longer due to deletions. 
As soon as one such parent is found, $v$ is added to $L'_{j+1}$ (and $|L'_{j+1}|$ is incremented). 
We initialize $L_{\nextrm}(v)$ with the parents of $v$ in $L'_j$ and add $v$ to $L_{\prevrm}(u)$ for each $u\in L_{\nextrm}(v)$. 
We also set $L_{\prevrm}(v)=\emptyset$ (the children of $v$ still need to be inserted).
If no parent is found, $v$ is inserted in $L'_1$ (and $|L'_{1}|$ is incremented) and we set $L_{\nextrm}(v)=L_{\prevrm}(v)=\emptyset$. In any case $v$ is added to $G'$. 
%This procedure can be performed in time $\bigo(\ell)$ per explored level (hence $\bigo(n)$ in total). 

Procedure $\procdelete(v)$ works as follows. 
Assume $v\in L'_i$. The first step is to remove $v$ from $G'$ and $L'_i$ (decrementing $|L'_i|$). 
Then, for each vertex $u\in L_{\nextrm}(v)$ (i.e. a parent of $v$) that is still in $G'$, we remove $v$ from $L_{\prevrm}(u)$. 
Next we scan the list $L_{\prevrm}(v)$ and for each vertex $w\in G'$ in such list we remove $v$ from $L_{\nextrm}(w)$ and $w$ from $L_{\prevrm}(v)$. 
If $L_{\nextrm}(w)=\emptyset$ after the removal of $v$, we add $w$ to $\MOVE$. 

\begin{algorithm}[t]
	\floatname{algorithm}{Procedure}
	\label{alg:move}
	\caption{$\procmove(v)$: Move vertex $v$ to a lower layer}
%	\DontPrintSemicolon
%	\SetKw{Break}{break}
	
	%\SetKwProg{procedure}{Procedure}{}{}
	%\procedure{\procmove(v)}{
	\begin{algorithmic}[1]
	    \STATE Remove $v$ from $L'_{h(v)}$, decrement $|L'_{h(v)}|$\;
	    
	    \FOR{\textbf{each} $w \in L_{\prevrm}(v)$}
	        \STATE Remove $v$ from $L_{\nextrm}(w)$\;
	       \IF{$L_{\nextrm}(w) = \emptyset$}
	          \STATE Add $w$ to $\MOVE$ \;
	       \ENDIF
	   \ENDFOR

	    \STATE $L_{\nextrm}(v) \gets \emptyset$, $L_{\prevrm}(v) \gets \emptyset$ \;
	    \FOR {$j = h(v)-2, \ldots, 1$}
	        \FOR{\textbf{each} $u \in L'_j$}
	            \IF{$(u, v) \in E(G')$}
	                \STATE Add $u$ to $L_{\nextrm}(v)$ \;
	            \ENDIF
	        \ENDFOR
	        
	        \IF{$L_{\nextrm}(v) \neq \emptyset$}
	            \STATE Add $v$ to $L'_{j+1}$, increment $|L'_{j+1}|$, and set $h(v)\gets j+1$\;
	            
    	            \FOR{$u\in L_{\nextrm}(v)$}
    	                \STATE Add $v$ to $L_{\prevrm}(u)$ \;
    	            \ENDFOR
	            
	            \RETURN   
	        \ENDIF
	    \ENDFOR
	    \STATE Add $v$ to $L'_1$, increment $|L'_1|$, and set $h(v)\gets 1$ \;
	    
	    \FOR{\textbf{each} $u \in L'_{2}$}
            \IF{$(v, u) \in E(G')$}
                \STATE Add $u$ to $L_{\prevrm}(v)$ and $v$ to $L_{\nextrm}(u)$\;
            \ENDIF
        \ENDFOR
	\end{algorithmic}
\end{algorithm}

It remains to describe $\procmove(v)$. Again assume $v\in L'_i$. Notice that by construction $i\geq 2$ since we never add to $\MOVE$ vertices in $L'_1$. 
%Observe that at this point there are no vertices to be deleted. 
We initially consider the vertices $w\in L_{\prevrm}(v)$, and remove $v$ from $L_{\nextrm}(w)$ and $w$ from $L_{\prevrm}(v)$. 
Notice that, similarly to the $\procdelete(v)$ case, if $L_{\nextrm}(w)=\emptyset$, we need to add $w$ to $\MOVE$.
Then we consider the layers $j=i-2,\ldots,1$ one by one, and check whether $L'_j$ contains at least one parent of $v$. 
If such a parent is found, $v$ is moved from $L'_i$ to $L'_{j+1}$ (updating $|L'_i|$ and $|L'_{j+1}|$, and setting $h(v)\gets j+1$, consequently).
All the parents of $v$ in $L'_j$ are added to $L_{\nextrm}(v)$. 
If no parent is found, $v$ is moved to $L'_1$ and the procedure sets $L_{\nextrm}(v)=\emptyset$, and $h(v)\gets 1$. 
In either case (that is, either $L_{\nextrm}(v) =\emptyset$ or $L_{\nextrm}(v)\neq \emptyset$), we scan all vertices of $L'_{h(v)+1}$ for children of $v$ and for each such child $u$ we add $u$ to $L_{\prevrm}(v)$ and $v$ to $L_{\nextrm}(u)$. 

Notice that the above procedure moves a vertex only to a strictly lower level.

Observe that, by giving priority to the $\procdelete()$ operations over the $\procmove()$ operations, we avoid increasing the size of any layer above $\ell$ (that is, we preserve case 4 of Invariant \ref{inv:update}). This not only allows us to identify antichains of length $\ell$ during an unstable state but also, most importantly, limits the size of the vertices to test for identifying parents and children during the subsequent $\procinsert()$ and $\procmove()$ operations.
\emph
{We defer to the Section~\ref{seclaterproofs} the proof that after each execution of $\procinsert()$, $\procdelete()$ or $\procmove()$ Invariant \ref{inv:update} is satisfied.}

At the end of the last iteration by Invariant \ref{inv:update} all vertices still in $G'$  are contained in some $L'_i$, $i=1,\ldots,h-1$. We execute a special final iteration $n+1$ where we add each such set $L'_i$ as an antichain to our decomposition.  

\begin{lemma}\label{lem:correctnessDecomposition}
The  algorithm described above (pseudo-code in Algorithm \ref{alg:decomposition}) computes an $(\ell,\frac{2n}{\ell})$-decomposition.
\end{lemma}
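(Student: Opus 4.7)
The plan is to take Invariant~\ref{inv:update} as a given (it is established in Section~\ref{seclaterproofs}) and use its five clauses to verify the three requirements of Definition~\ref{def:chain-antichain-decomposition}: every set added to $\cal P$ is a chain of $G$, every set added to $\cal Q$ is an antichain of $G$, and $|\cal P|\le \ell$, $|\cal Q|\le 2n/\ell$, with the union spanning $V$.

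First I would argue that every set added to $\cal Q$ is an antichain. This is immediate from clause~4 of Invariant~\ref{inv:update}: whenever a layer $L'_i$ is appended to $\cal Q$ (either during an iteration because $|L'_i|=\ell$, or at the final pass over the surviving layers), $L'_i\subseteq G'\subseteq G$ contains no two adjacent vertices. Next I would argue that every $C_t$ added to $\cal P$ is a chain of length exactly $h$. The case $h(v_t)=h$ is entered right after an $\procinsert(v_t)$ and before any $\procdelete()$/$\procmove()$, so by clause~5 every vertex currently in some $L'_i$ with $i\ge 2$ has a non-empty $L_{\nextrm}$-list, i.e.\ a parent one layer below. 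Starting from $u=v_t\in L'_h$ and repeatedly replacing $u$ by an arbitrary element of $L_{\nextrm}(u)$, I descend through one vertex per layer until $L_{\nextrm}(u)=\emptyset$, which by clause~5 can only happen in $L'_1$. The resulting sequence, reversed, is a directed path of $G'$ with exactly $h$ vertices (one per layer $L'_1,\ldots,L'_h$), hence a chain of $G$.

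For the cardinality bounds I would use disjointness together with the sizes of the individual pieces. Chains in $\cal P$ are vertex-disjoint because every vertex of $C_t$ is immediately pushed into $\DEL$ and removed from $G'$; since each chain has exactly $h=\lceil n/\ell\rceil$ vertices, $|\cal P|\cdot h\le n$ yields $|\cal P|\le n/h\le \ell$. For antichains, those created during iterations $1,\ldots,n$ each have size exactly $\ell$ and are vertex-disjoint (again because their vertices are sent to $\DEL$), so there are at most $\lfloor n/\ell\rfloor$ of them. The special iteration $n+1$ contributes at most $h-1$ further antichains, because $L'_h=\emptyset$ at that moment by clause~1. A short case split on whether $\ell\mid n$ shows $\lfloor n/\ell\rfloor+(h-1)\le 2n/\ell$ in both cases, giving $|\cal Q|\le 2n/\ell$. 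Finally, spanning follows since each $v_t$ is inserted into $G'$ at iteration $t$ (clause~1) and is thereafter either deleted as part of some $C_{t'}\in\cal P$ or some $L'_i\in\cal Q$, or else survives until iteration $n+1$ and is then placed into exactly one of the antichains $L'_1,\ldots,L'_{h-1}$ added at that step.

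The only subtle point—and the place where I expect the main care is needed—is the chain-construction step: one must be confident that when the loop reads an element of $L_{\nextrm}(u)$ the layer structure is still stable at level $h(u)$, so that the returned vertex actually lies one layer lower in $G'$. This is exactly what clause~5 (stable state immediately before, and thus also immediately after, $\procinsert(v_t)$, since an insert affects only the newly added vertex) guarantees, but it is worth spelling out explicitly that no $\procdelete()$/$\procmove()$ has fired yet between the $\procinsert(v_t)$ and the start of the chain traversal. Once this observation is in place, the remaining pieces (antichain property, disjointness, counting, spanning) are essentially bookkeeping on top of Invariant~\ref{inv:update}.
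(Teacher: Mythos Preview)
Your proposal is correct and follows essentially the same route as the paper's proof: both take Invariant~\ref{inv:update} as given, use clauses~4--5 to certify that the produced sets are genuine antichains and chains of length exactly $h$, and then count via disjointness (chains of size $h\ge n/\ell$ give $|\cP|\le\ell$; size-$\ell$ antichains plus at most $h-1$ final layers give $|\cQ|\le 2n/\ell$). Your explicit remark that the chain traversal starts immediately after $\procinsert(v_t)$, before any $\procdelete()$/$\procmove()$, is exactly the stability argument the paper makes, so there is no gap.
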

\begin{proof}
By construction each vertex which is included in a chain or antichain in the first $n$ iterations is deleted from $G'$, hence it is not included in any following chain or antichain. The antichains added to ${\cal Q}$ in iteration $n+1$ are disjoint by Invariant \ref{inv:update}. Furthermore, all vertices are added at some point to $G'$, hence they are included by construction in some chain or antichain at some later point. Thus the chains and antichains induce a partition of the vertex set $V$.

Each list $L_{next}(v)$ by construction contains parents of $v$ only. Furthermore, right before the $\procinsert(v_t)$ operation that leads to the construction of some set $C_t$, each vertex $w\in L'_i$, $i\geq 2$, must have at least one parent in $L'_{i-1}$ by construction (by Invariant \ref{inv:update}), hence $L_{\nextrm}(w)$ is not  empty and contains vertices in $G'$. Consequently $C_t$ is a chain in $G'$ of size precisely $h$. 

Similarly, by Invariant \ref{inv:update}, vertices in each set $L'_i$ that are added to the set ${\cal Q}$ of antichains are not parents of each other, hence they form a correct antichain. Notice also that all the sets $L'_i$ that are added to ${\cal Q}$ in the first $n$ iterations have size precisely $\ell$ and consist of vertices in $G'$ only. Indeed, the condition $|L'_i|=\ell$  happens after an $\procinsert()$ or $\procmove()$ operation. In both cases there are no vertices in $\DEL$, hence any vertex in $L'_i$ is also present in $G'$.  

Finally, we bound the number of chains and antichains. As argued before, each chain $C_t$ has size precisely $h \geq n/\ell$. Hence disjointness implies that there are at most $\ell$ such chains. Similarly, each antichain that we add in the first $n$ iterations has size precisely $\ell$, hence disjointness implies that there are at most $n/\ell$ such antichains. In the final iteration $n+1$ we add at most $h-1 \leq n/\ell$ extra antichains. The claim follows. 
\end{proof}

\emph{We defer to the Section~\ref{seclaterproofs} the proof that the running time of the algorithm is $\bigo(n^2)$.}

%The proof of Theorem \ref{thr:mainDecomposition} follows trivially from Lemmas \ref{lem:correctnessDecomposition} and \ref{lem:timeDecomposition}.

\section{All-Pairs LCA in DAGs}
\label{sec:mainresult}

In this section we present our improved algorithm for All-Pairs LCA in DAGs.

We start by sketching the high level ideas behind the algorithm. Let $G_{input}$ be the input DAG and let $G$ be the transitive closure of $G_{input}$. We compute $G$ in $\bigo(n^\omega)$ time and  solve the All-Pairs LCA problem on $G$ (obviously the solution in the two cases is identical). 

To do this,  we first compute an $(n^x,2n^{1-x})$-decomposition $(\cP,\cQ)$ of $G$ in $\bigo(n^2)$ time with the algorithm from Theorem \ref{thr:mainDecomposition}. Recall that $\mathcal{P}=\{P_1, \dots, P_p\}$ is a set of $p\leq n^x$ chains and $\mathcal{Q}=\{Q_1, \dots, Q_q\}$ a set of $q\leq 2n^{1-x}$ antichains. Here $x\in [0,1]$ is a parameter to be optimized later in order to minimize the overall running time. 

We now define %in a natural way 
the notion of LCA restricted to a subset $W$ of vertices as follows. 
\begin{definition}
Given a DAG $G=(V,E)$, a subset of vertices $W\subseteq V$, and a  pair of vertices $u,v\in V$, $LCA_W(u,v)$ is the set of vertices  $w\in W$ which are ancestors of both $u$ and $v$ and such that there is no descendent $w'\in W$ of $w$ with the same property. Any $w\in LCA_W(u,v)$ is a $W$-restricted LCA of $\{u,v\}$. The $W$-restricted All-Pairs LCA problem is to compute $\lca_W(u,v)\in LCA_W(u,v)$ for all pairs of vertices $u,v\in V$ ($\lca_W(u,v)=-\infty$ if $LCA_W(u,v)=\emptyset$).   
\end{definition}
We use $\cP$-restricted and $\cQ$-restricted as shortcuts for $(\cup_{P\in \cP}P)$-restricted and $(\cup_{Q\in \cQ}Q)$-restricted resp., and also define analogously $\textrm{LCA}_{\cP}(\cdot,\cdot)$, $\lca_{\cP}(\cdot,\cdot)$ etc. The next step is to solve the $\cP$-restricted and $\cQ$-restricted All-Pairs LCA problems. In particular, we plan to compute the values $\lca_{\cP}(u,v)$ and $\lca_{\cQ}(u,v)$ for all pairs of vertices $u,v\in V$. This is explained in sections \ref{sec:Prestricted} and \ref{sec:Qrestricted} resp. In more detail, the first problem is solved in time $\widetilde{\bigo}(n^{\frac{\omega(1,x,1) + 2 + x}{2}})$ using a reduction to one max-min product. The second problem is solved in time $\widetilde{\bigo}(n^{1-x+\omega(1,x,1)})$ by performing one Boolean matrix product of cost $\widetilde{\bigo}(n^{\omega(1,x,1)})$ for each $Q\in \cQ$.

At this point we need to combine the two solutions together. A naive approach might be as follows. Let us label the vertices from $1$ to $n$ according to some arbitrary topological order. Then, for any pair of vertices $u,v$, we simply set $\lca(u,v)=\max\{\lca_{\cP}(u,v),\lca_{\cQ}(u,v)\}$ (in total time $\bigo(n^2)$). Unfortunately, as discussed in Section \ref{sec:global}, there exist  topological orderings for which this approach fails. In the same section we show how to compute a specific topological ordering in $\bigo(n^2)$ time such that the above combination indeed works. Then, it will be sufficient to optimize over the parameter $x$.

Throughout this section we assume that vertices are labeled with integers between $1$ and $n$ (according to some given order to be specified later).

\subsection{Computing $\cP$-Restricted LCAs}
\label{sec:Prestricted}

In this section we present our algorithm for the $\cP$-restricted All-Pairs LCA problem. We next assume that vertices are labeled with integers $1,\ldots,n$ according to some topological order. In the next section we will specify such ordering in a more careful way in order to achieve our final result.

Our algorithm works as follows (see also the pseudo-code in Algorithm \ref{alg:chain-restricted-lca}). For each vertex $v$ and each chain $P_i$, we compute the ancestor $w_i(v)$ of $v$ in $P_i$ with largest index ($w_i(v)=-\infty$ if there is no such ancestor). Next, for each pair of vertices $u,v$ and each $P_i$, we compute $w_i(u,v)=\min\{w_i(u),w_i(v)\}$. Finally we set $\lca_{\cP}(u,v)=\max_{1 \le i \le p} \{w_i(u,v)\}$.

Recall that, given two matrices $A$ and $B$, their max-min product $C=A \ovee B$ is specified by $C[i,j] = \max_k \min \{A[i,k],B[k,j]\}$.

In order to implement the above algorithm, it is sufficient to construct an  $n\times n^x$ matrix $A$ whose rows are indexed by vertices in $V$ and whose columns are indexes by chains $P_i$. The entry $A[v,P_i]$ corresponds to the value $w_i(v)$ defined above. Then it is sufficient to compute $C=A \ovee A^T$ and set $\lca_{\cP}(u,v)=C[u,v]$ for all pairs $u,v\in V$.

\begin{algorithm}[t]
%\SetAlgoLined
\textbf{Input:} Transitive closure graph $G=(V,E)$, 
and a family of chains $\mathcal{P}=\{P_1, \dots, P_p\}$ of $G$ where $p\leq n^{x}$.\\
\textbf{Output:} {$\mathcal{P}$-restricted LCA $\lca_{\mathcal{P}}(u,v)$ for each pair of vertices $u,v\in V$.}

    \begin{algorithmic}[1]
	    \STATE Initialize $\lca_{\cP}(\cdot,\cdot)$ with $-\infty$\;
	    
	    \STATE Let $A$ be an $n \times p$ matrix\;
	    
	    \FOR{$P_i \in \mathcal{P}$}
	        \FOR{$v\in V$}
	            \STATE Let $w_i(v)$ be the parent of $v$ in $P_i$ with the largest index, otherwise $w_i(v)=-\infty$\;
	            
	            \STATE Set $A[v, i] \leftarrow w_i(v)$\;
	        \ENDFOR
	    \ENDFOR
	    
	    \STATE Compute the (max,min)-product $A \ovee A^T$\;

	    \FOR{all $u,v \in V, u\not=v$}
	            \STATE $\lca_{\mathcal{P}}(u,v) \leftarrow (A \ovee A^T)[u,v]$\;
	    \ENDFOR
	
	\end{algorithmic}
 \caption{Compute $\lca_{\mathcal{P}}(u,v)$ for all pairs of vertices $u,v\in V$.}
 \label{alg:chain-restricted-lca}
\end{algorithm}

\begin{lemma}
The $\cP$-restricted All-Pairs LCA problem can be solved in time $\widetilde{\bigo}(n^{\frac{\omega(1,x,1) + 2 + x}{2}})$.
\label{lemma:chain-restricted-lca}
\end{lemma}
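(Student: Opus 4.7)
The plan is to verify both correctness of Algorithm~\ref{alg:chain-restricted-lca} and its stated running time. The crucial structural observation driving correctness is that since vertices are labeled $1,\ldots,n$ according to a topological order of $G$, and each chain $P_i=c_{i,1},\ldots,c_{i,k_i}$ is a directed path in $G$ (which equals its own transitive closure), the labels along any chain are strictly increasing, and ``deepest ancestor of $v$'' corresponds to ``ancestor with the largest label''. First, I would fix a chain $P_i$ and an arbitrary vertex $v$ and argue that the set of ancestors of $v$ lying in $P_i$ is a prefix $c_{i,1},\ldots,c_{i,j(v)}$ of the chain (by transitivity, if $c_{i,j}$ is an ancestor of $v$, so are all its chain-predecessors). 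Hence the deepest ancestor of $v$ in $P_i$ is $c_{i,j(v)}$, whose label is exactly the quantity $w_i(v)$ stored in $A[v,i]$.

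Next, for a pair $(u,v)$, the common ancestors of $u,v$ in $P_i$ are $c_{i,1},\ldots,c_{i,\min(j(u),j(v))}$, and the deepest such vertex is the one of index $\min(j(u),j(v))$; its label equals $\min(w_i(u),w_i(v))$, i.e.\ the $(u,v)$ entry contributed by column $i$ in the max-min product. It then remains to show that $w^\star:=\max_i\min(w_i(u),w_i(v))=(A\ovee A^T)[u,v]$ is indeed a valid $\cP$-restricted LCA of $u,v$. This is the main conceptual step: suppose for contradiction that some proper descendant $w'\in\bigcup_i P_i$ of $w^\star$ were also a common ancestor of $u,v$. Let $w'\in P_j$. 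By the previous paragraph, the label of $w'$ is bounded by $\min(w_j(u),w_j(v))\le w^\star$; on the other hand, since $w'$ is a proper descendant of $w^\star$ in the topologically ordered DAG $G$, its label must strictly exceed that of $w^\star$, yielding the desired contradiction. The edge cases $w_i(v)=-\infty$ and $\max_i=-\infty$ (no chain contains a common ancestor) are handled by the initialization.

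For the running time, filling in the matrix $A$ of size $n\times p$ with $p\le n^x$ can be done in $\bigo(n^2)$ time: for each pair $(v,P_i)$, walk through $P_i$ keeping the largest-label vertex that is an ancestor of $v$ (a constant-time test per vertex using the adjacency matrix of the transitive closure), whose total cost is $\bigo(n\cdot\sum_i|P_i|)=\bigo(n^2)$ since the chains are disjoint. The dominant cost is the max-min product $A\ovee A^T$ of an $n\times n^x$ by an $n^x\times n$ matrix, which by Theorem~\ref{th:rectangularminmax} runs in time
\[
\widetilde\bigo\!\left(\sqrt{\mathrm{MM}(n,n^x,n)\cdot n^{2+x}}\right)=\widetilde\bigo\!\left(n^{(\omega(1,x,1)+2+x)/2}\right),
\]
and the final $\bigo(n^2)$ table write-back is dominated by this term since $(\omega(1,x,1)+2+x)/2\ge 2$. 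Combining these bounds yields the claimed complexity.

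The step I expect to be the main obstacle (or at least the most subtle) is the correctness proof that $\max_i\min(w_i(u),w_i(v))$ is genuinely a $\cP$-restricted LCA rather than merely a common ancestor: the argument cannot be purely local to one chain and requires combining the topological order of the global DAG with the prefix property within each chain. Once that observation is in place, both the reduction to a single rectangular max-min product and the running-time bookkeeping are straightforward.
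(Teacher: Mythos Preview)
Your proposal is correct and follows essentially the same approach as the paper's proof: both reduce correctness to the observation that the deepest common ancestor of $u,v$ within each chain $P_i$ has label $\min(w_i(u),w_i(v))$, and then use the topological order to derive a contradiction from any putative deeper common ancestor in $\cP$. Your exposition of the prefix structure within chains and the $\bigo(n^2)$ cost of building $A$ is slightly more detailed than the paper's, but the argument and the appeal to Theorem~\ref{th:rectangularminmax} for the running time are the same.
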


\begin{proof}
Consider the above algorithm (pseudo-code in Algorithm~\ref{alg:chain-restricted-lca}). 
To analyze its running time, we observe that the matrix $A$ can be built in time $\bigo(n^2)$  
by scanning the vertices $v\in V$ and the vertices $w$ in $\cP$. The rest of the computation takes time $\bigo(n^{\frac{\omega(1,x,1) + 2 + x}{2}})$ by Theorem \ref{th:rectangularminmax}. The claim follows.

For the correctness observe that, if $P_i$ contains a vertex in $LCA_{\cP}(u,v)$, then this vertex has to be $w=w_i(u,v)$.
Indeed, by construction $w$ is an ancestor of both $u$ and $v$.
Since $w_i(u,v)=\min\{w_i(u),w_i(v)\}$, any successor $w'$ of $w$ along $P_i$ is not an ancestor of $u$ or of $v$.
Vice versa, any ancestor $w'$ of $w$ along $P_i$ cannot be in $LCA_{\cP}(u,v)$ due to the existence of $w$.
Therefore the set $W:=\{w_i(u,v)\}_i$ contains $LCA_{\cP}(u,v)$.
Notice also that $W=\{-\infty\}$ iff $u$ and $v$ do not have a common ancestor in $\cP$, in which case $LCA_{\cP}(u,v)=\emptyset$.
Therefore we can w.l.o.g. assume that the algorithm returns some $w\in W$, $w\neq -\infty$.
In particular, $w$ is a vertex with the largest index in $W$ according to the considered topological order.
Assume by contradiction that $w\notin LCA_{\cP}(u,v)$.
This implies that there exists some other vertex $w'\in LCA_{\cP}(u,v)$ which is a descendant of $w$.
But vertex $w'$ must be contained in $W$, which implies $w'<w$ (otherwise the algorithm would not return $w$).
This is a contradiction since $w'$ is a descendant of $w$ and at the same time has a smaller index in some topological order.
\end{proof}

\subsection{Computing $\cQ$-Restricted LCAs}
\label{sec:Qrestricted}

In this section we present our algorithm for the $\cQ$-restricted All-Pairs LCA problem.
For notational convenience let us rename $\cQ$ as $\cQ'=\{Q'_1, \dots, Q_{q'}\}$. Recall that $q'\leq 2n^{1-x}$. 
The first step in our construction is to transform $\cQ'$ into a more convenient family of antichains $\cQ$ as follows. 
\begin{definition}
Let $\cQ=\{Q_1, \dots, Q_q\}$ be a collection of disjoint antichains of a transitive closure graph $G=(V,E)$. $\mathcal{Q}$ is \emph{path-respecting} if for any two vertices $x\in Q_i,y \in Q_j$ such that $(x,y) \in E$ it holds that $i<j$.  
\end{definition}
\begin{lemma}[Folklore]\label{lemma:total-order-antichains}
Given a transitive closure graph $G=(V,E)$ and a collection of $q'$ disjoint antichains $\cQ'=\{Q'_1, \dots, Q'_{q'}\}$ over the vertex set $W\subseteq V$, a greedy algorithm computes a partition of $W$ into a collection of $q\leq q'$ disjoint antichains $\cQ=\{Q_1, \dots, Q_{q}\}$ in time $\bigo(n^2)$.
\end{lemma}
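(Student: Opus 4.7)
\medskip

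\noindent\textbf{Proof plan.} The plan is to implement the greedy relabeling via a longest-chain potential on $G[W]$, which is essentially the constructive proof of Mirsky's theorem. First, I would compute a topological order of $G$ (available in $\bigo(n^2)$ time, or already computed earlier in the paper). Restricting to $W$ and using the transitive closure adjacency matrix, I would then run a single pass of dynamic programming: for each vertex $v$ processed in topological order, set
\[
\ell(v)\;=\;1+\max\bigl\{\ell(u)\;:\;u\in W,\;(u,v)\in E\bigr\},
\]
with the convention that the max of the empty set is $0$. Each evaluation costs $\bigo(n)$ (one scan of the row of $v$ in the adjacency matrix), giving $\bigo(n^2)$ overall. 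Finally, define $Q_i:=\{v\in W:\ell(v)=i\}$ for $i=1,\dots,q$, where $q:=\max_{v\in W}\ell(v)$.

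The correctness checks are short. If $x,y\in Q_i$ with $(x,y)\in E$, then the DP forces $\ell(y)\geq \ell(x)+1$, contradicting $\ell(x)=\ell(y)$; hence each $Q_i$ is an antichain. The same inequality shows that whenever $(x,y)\in E$ with $x\in Q_i$, $y\in Q_j$ we have $j\geq i+1>i$, which is exactly the path-respecting property. By construction the $Q_i$ partition $W$.

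The main point that requires a small argument is the bound $q\leq q'$. I would trace back through the DP to extract a chain $v_{i_1},v_{i_2},\dots,v_{i_q}$ in $G[W]$ of length $q$ (starting from a vertex $v$ with $\ell(v)=q$ and, at each step, following a predecessor $u$ on which the max in the DP was attained). Since $G$ is a transitive closure, every two vertices on this chain are joined by an edge, so no two of them can lie in the same antichain of $\cQ'$. Consequently the $q$ chain vertices use $q$ distinct antichains of $\cQ'$, giving $q\leq q'$.

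I do not expect a serious obstacle: the only subtlety is to confirm that the DP is applied to $G[W]$ (not $G$) so that the longest-chain lower bound really comes from vertices in $W$, which in turn guarantees that the extracted chain occupies distinct antichains of the partition $\cQ'$ of $W$. Everything else is a standard Mirsky-style argument, and the $\bigo(n^2)$ bound follows from the single adjacency-matrix scan per vertex plus a final $\bigo(n)$ pass that buckets vertices by $\ell$-value into the output lists $Q_1,\dots,Q_q$.
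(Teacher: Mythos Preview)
Your proposal is correct and takes essentially the same approach as the paper: both compute the canonical layering of $G[W]$ by longest-chain depth (you via a single DP pass in topological order, the paper via repeated removal of in-degree-$0$ vertices), and both bound the number of layers by the height $h$ of $G[W]$, which is at most $q'$ by Mirsky's theorem---you spell out the pigeonhole argument on an extracted chain, while the paper simply cites Mirsky. The running-time analyses are also equivalent, and your observation that the resulting family is path-respecting is exactly what the paper uses the lemma for.
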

\begin{proof}
Let us initialize $W'$ to $W$. 
The greedy algorithm proceeds in rounds. In round $i$ we set $Q_i = \{$all vertices with indegree $0$ in $G[W']\}$. 
Then $Q_i$ is added to $\cQ$, and its vertices are removed from $W'$. 
We halt when $W'=\emptyset$. 
It is easy to see that each $Q_i$ is indeed an antichain. 
Let $q$ be the number of antichains produced by the algorithm and let $h$ be the height of $G[W]$, that is the size of its longest chain. 
We have that $h \le q'$, since by Mirsky's theorem (c.f. \cite{mirsky1971dual}) size of any antichain cover of $G[W]$ is at least $h$. 
We also have $h = q$, since greedy algorithm reduces the length of longest chain in $G[W']$ by exactly one at each iteration.

The above algorithm can be easily implemented in time $\bigo(n^2)$. 
Indeed, it is sufficient to maintain the in-degree of the vertices and update them each time a vertex is removed. 
Whenever during an iteration the in-degree of some vertex $v$ becomes $0$ because of the removal of other vertices, we add $v$ to a list of vertices to be used in the next round.
\end{proof}

We use Lemma \ref{lemma:total-order-antichains} to transform $\mathcal{Q}'$ into a path-respecting family of $q\leq q'\leq 2n^{1-x}$ antichains $\mathcal{Q}=\{Q_1, \dots, Q_{q}\}$.
It remains to solve the $\cQ$-restricted All-Pairs LCA problem. To this aim, we use a relatively simple reduction to Fast Boolean Matrix Multiplication. Let $C=A\cdot B$ be the product of an $n\times p$ Boolean (i.e., $0$-$1$) matrix $A$ and a $p\times n$ Boolean matrix $B$. The \emph{witness matrix} $W$ of this product is an $n\times n$ matrix where $W[i,j]$ is any index $k$ such that $A[i,k]=B[k,j]=1$. We conventionally set $W[i,j]=-\infty$ if no such index exists. Recall that the time needed to compute $C$ is denoted by $\textrm{MM}(n,p,n)$. A mild adaptation of the algorithm and analysis in \cite{DBLP:conf/focs/AlonGMN92} shows that we can compute $W$ roughly in the same amount of time.   

\begin{theorem}[Folklore, corollary of \cite{DBLP:conf/focs/AlonGMN92}]
\label{th:rectwitness}
The witness matrix $W$ of the product $C=A\cdot B$ of an $n\times p$ Boolean matrix $A$ and a $p\times n$ Boolean matrix $B$ can be computed in time $\widetilde{\bigo}(\textrm{MM}(n,p,n))$ by a deterministic algorithm.
\end{theorem}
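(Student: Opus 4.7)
The plan is to adapt the classical Alon--Galil--Margalit--Naor argument, whose only change for the rectangular case is a careful cost accounting. The starting point is the \emph{unique witness trick}: replace $A$ by the integer matrix $A'$ with $A'[i,k] := k \cdot A[i,k]$ and compute $A' \cdot B$ over the integers. For any entry $(i,j)$, the result equals $\sum_{k\,:\,A[i,k]=B[k,j]=1} k$; if only one such $k$ exists, the entry equals that $k$ exactly. Uniqueness can be tested by also computing $A \cdot B$ (as integer matrices) and checking whether the count is $1$. Hence, once we have fixed a subset $S \subseteq [p]$ of columns/rows that ``isolates'' a single witness for $(i,j)$, a single rectangular product restricted to $S$ reveals that witness.

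Next I would use geometric sampling: for each $t = 0,1,\ldots,\lceil \log_2 p\rceil$, take a random subset $S_t \subseteq [p]$ with each element included independently with probability $2^{-t}$. For an entry $(i,j)$ having $w$ witnesses, the level $t$ with $2^t \approx p/w$ produces an isolating subset with constant probability. Repeating each level $\Theta(\log n)$ times and taking a union bound over the $n^2$ entries yields a witness for every non-zero entry with high probability. The computation at level $t$ consists of (i) one product of an $n \times |S_t|$ integer matrix by an $|S_t| \times n$ integer matrix for the weighted version, and (ii) one such product for the counting version, costing $\bigo(\textrm{MM}(n,|S_t|,n))$ each (the entries have magnitude $\bigo(p)$, contributing only a logarithmic factor for arithmetic on $\bigo(\log n)$-bit integers).

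The central quantitative step is bounding the total work. Using the standard subadditivity of matrix multiplication in the middle dimension,
\begin{equation*}
\textrm{MM}(n,p,n)\;\ge\;\frac{p}{s}\cdot \textrm{MM}(n,s,n)\qquad\text{for every }1\le s\le p,
\end{equation*}
obtained by partitioning the middle dimension into $p/s$ blocks of size $s$ and summing block products. Consequently,
\begin{equation*}
\sum_{t=0}^{\lceil\log_2 p\rceil} \textrm{MM}(n,2^t,n) \;\le\; \textrm{MM}(n,p,n)\cdot\sum_{t=0}^{\lceil\log_2 p\rceil}\frac{2^t}{p} \;=\; \bigo(\textrm{MM}(n,p,n)).
\end{equation*}
Together with the $\widetilde\bigo(1)$ overhead from amplification, integer arithmetic, and the union bound this yields the randomized bound $\widetilde\bigo(\textrm{MM}(n,p,n))$.

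To obtain a \emph{deterministic} algorithm as stated, I would invoke the standard derandomization used in Alon--Naor: at each level $2^t$, replace random sampling by an explicit small family of subsets of $[p]$ that is $(p/2^t)$-splitting (for instance via perfect hash families or the greedy/splitter constructions used there), incurring only an $\bigo(\log n)$ multiplicative overhead. Alternatively, after the first pass has identified a large fraction of witnesses, one can subtract the discovered witnesses from $A$ and iterate $\bigo(\log n)$ times, each iteration shrinking the number of undiscovered witnesses by a constant factor; since each iteration costs $\widetilde\bigo(\textrm{MM}(n,p,n))$ by the above accounting, the total deterministic cost is still $\widetilde\bigo(\textrm{MM}(n,p,n))$. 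The main obstacle is precisely this derandomization step, since the naive summation over random trials is already clean; ensuring that the explicit splitter family works at \emph{every} level without blowing up the cost beyond polylogarithmic factors is the delicate part, but it goes through exactly as in the square case of \cite{DBLP:conf/focs/AlonGMN92}.
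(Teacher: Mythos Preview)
The paper does not actually prove this theorem; it is stated as a folklore corollary of \cite{DBLP:conf/focs/AlonGMN92} and simply used as a black box, so there is no argument in the paper to compare against. Your approach---the unique-witness trick, geometric sampling over the middle dimension, and derandomization via splitters---is precisely the standard one and is correct in outline.

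There is, however, a concrete error in your cost accounting. The inequality you assert,
\[
\textrm{MM}(n,p,n)\;\ge\;\frac{p}{s}\cdot \textrm{MM}(n,s,n),
\]
is false. Partitioning the middle dimension into $p/s$ blocks and summing shows the \emph{opposite} direction, namely $\textrm{MM}(n,p,n)\le \frac{p}{s}\,\textrm{MM}(n,s,n)+\bigo(n^2 p/s)$. For a counterexample take $s=1$ and $p=n$: then $\textrm{MM}(n,1,n)=\Theta(n^2)$, so $(p/s)\,\textrm{MM}(n,s,n)=\Theta(n^3)$, whereas $\textrm{MM}(n,n,n)=\bigo(n^{\omega})$ with $\omega<3$. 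Fortunately the conclusion you need is much cheaper: by monotonicity in the middle dimension one has $\textrm{MM}(n,2^t,n)\le \textrm{MM}(n,p,n)$ for every $t\le \lceil\log_2 p\rceil$, and there are only $\bigo(\log p)$ levels (each repeated $\bigo(\log n)$ times), giving $\widetilde{\bigo}(\textrm{MM}(n,p,n))$ directly---no subadditivity is needed. There is also a small indexing slip: with inclusion probability $2^{-t}$ the isolating level for an entry with $w$ witnesses is $2^t\approx w$, not $2^t\approx p/w$; this is harmless since you sweep over all levels anyway.
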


Our algorithm works as follows (see also the pseudo-code in Algorithm \ref{alg:antichain-restricted-lca}). We initialize $\lca_{\cQ}(\cdot,\cdot)$ with $-\infty$. Then we consider the antichains $Q_q,\ldots,Q_1$ in this order. For each $Q_i$ and each pair of vertices $u,v$ with $\lca_{\cQ}(u,v)=-\infty$, we check if $Q_i$ contains a common ancestor $w$ of $v$ and $u$, in which case we set  $\lca_{\cQ}(u,v)=w$. In order to perform efficiently this step we build an $n\times n^x$ matrix $A$ whose rows are indexed by vertices  in $V$ and whose columns are indexed by vertices in $Q_i$. We set entry $A[v,w]$ to $1$ if $w$ is an ancestor of $v$ and to $0$ otherwise\footnote{Padding with zeros the columns not corresponding to vertices in $Q_i$.}. We compute the product $A \cdot A^T$ and  its witness matrix $W$. Notice that the pair $u,v$ has a common ancestor $w$ in $Q_i$ iff $A\cdot A^T[u,v]\neq 0$, in which case $W[u,v]$ contains one such vertex. Thus it is sufficient to set $\lca_{\cQ}(u,v)=W[u,v]$.

\begin{algorithm}[t]
%\SetAlgoLined
\textbf{Input:} {Transitive closure graph $G=(V,E)$, and a family of antichains $\mathcal{Q}=\{Q_1, \dots, Q_q\}$ of $G$ that is path-respecting such that $q\leq 2n^{1-x}$.}\\
\textbf{Output:} {$\mathcal{Q}$-restricted LCA $\lca_{\mathcal{Q}}(u,v)$ for each pair of vertices $u,v\in V$.}
Initialize $\lca_{\cQ}(\cdot,\cdot)$ with $-\infty$.

\begin{algorithmic}[1]
  \FOR{$i=q,\ldots,1$}
    \STATE Initialize an $n \times n^x$ matrix $A$ with zeros\;
    
 \STATE Let $\phi_{i}:Q_{i}\xrightarrow{1:1} \{1,\dots, |Q_{i}|\}$ be an arbitrary bijection and $\phi^{-1}_{i}(\cdot)$ be its inverse function\;
    
    \FOR{all $x\in V,y\in Q_{i}$ such that $(x,y)\in E$}
        \STATE $A[x, \phi_{i}(y)] \leftarrow 1$\;
    \ENDFOR
    
    \STATE Compute $A\cdot A^T$, and its witness matrix $W$\;
    
    \FOR{all $u,v \in V, u\not=v$}
        \IF{$\lca_{\mathcal{Q}}(u,v) = -\infty$ and $A\cdot A^T[u,v] \not=0$}
            \STATE $\lca_{\mathcal{Q}}(u,v) \leftarrow \phi^{-1}_{i}(W[u,v])$\;
        \ENDIF
    \ENDFOR
    
 \ENDFOR
\end{algorithmic}
 \caption{Compute $\lca_{\mathcal{Q}}(u,v)$ for all pairs of vertices $u,v\in V$.} 
 \label{alg:antichain-restricted-lca}
\end{algorithm}
\begin{lemma}\label{lemma:antichain-restricted-lca}
The $\mathcal{Q}$-restricted All-Pairs LCA problem can be solved in time $\widetilde{\bigo}(n^{1-x+\omega(1,x,1)})$.
\end{lemma}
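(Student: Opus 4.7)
My plan is to separate the argument into a running-time analysis and a correctness proof, both of which are relatively direct once the structure of the algorithm is unpacked. For the running time, the outer loop has $q \leq 2n^{1-x}$ iterations. Within each iteration, I would observe that building the matrix $A$ from the (already computed) transitive closure, and reading off entries of $W$, takes time $\bigo(n \cdot |Q_i|) + \bigo(n^2) = \bigo(n^2)$. The dominant cost is the computation of the Boolean product $A \cdot A^T$ together with its witness matrix $W$, which by Theorem~\ref{th:rectwitness} takes time $\widetilde\bigo(\textrm{MM}(n,n^x,n)) = \widetilde\bigo(n^{\omega(1,x,1)})$. Since $\omega(1,x,1) \geq 2$, the $\bigo(n^2)$ overhead is absorbed. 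Multiplying by the number of iterations gives the claimed $\widetilde\bigo(n^{1-x+\omega(1,x,1)})$ total.

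For correctness I would first observe that after processing $Q_i$, the algorithm has correctly assigned $\lca_{\mathcal{Q}}(u,v)$ to a vertex of $Q_j$ for some $j \geq i$ whenever any $Q_j$ with $j\geq i$ contains a common ancestor of $u$ and $v$. This follows from the semantics of the witness matrix: $A\cdot A^T[u,v]\neq 0$ exactly when $u$ and $v$ share an ancestor in $Q_i$, and in that case $W[u,v]$ points to such a vertex. Combined with the fact that an assignment occurs only when the current value is $-\infty$, each pair $u,v$ is ultimately labeled with an ancestor lying in the highest-indexed antichain that contains one (and stays at $-\infty$ if no antichain contains one, which is precisely the case $LCA_{\mathcal{Q}}(u,v) = \emptyset$).

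The main obstacle, and what I expect to be the heart of the proof, is arguing that the common ancestor returned is actually a \emph{lowest} one restricted to $\bigcup_j Q_j$. Suppose the algorithm outputs $w \in Q_i$ as $\lca_{\mathcal{Q}}(u,v)$, and suppose for contradiction that some $w' \in Q_j$ is a proper descendant of $w$ and a common ancestor of $u$ and $v$. Since $Q_i$ is an antichain, $j \neq i$. By the path-respecting property of $\mathcal{Q}$ (guaranteed by the preprocessing via Lemma~\ref{lemma:total-order-antichains}), the edge from $w$ to $w'$ in the transitive closure forces $j > i$. But then $Q_j$ is processed strictly before $Q_i$ in the reverse loop, and at that earlier iteration $A \cdot A^T[u,v]$ would have been nonzero (witnessed by $w'$), so $\lca_{\mathcal{Q}}(u,v)$ would already have been set to some vertex of $Q_j$, contradicting that it was still $-\infty$ when $Q_i$ was processed. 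Hence $w \in LCA_{\mathcal{Q}}(u,v)$, completing the correctness argument and establishing the lemma.
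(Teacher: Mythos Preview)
Your correctness argument is essentially identical to the paper's and is correct. The running-time argument, however, takes a shortcut that the paper deliberately avoids. You bound each iteration by $\widetilde\bigo(\textrm{MM}(n,n^x,n))$, which presupposes $|Q_i|\le n^x$ so that $A$ fits into an $n\times n^x$ matrix. That size bound holds for the antichains coming out of the $(n^x,2n^{1-x})$-decomposition, but after the greedy reordering of Lemma~\ref{lemma:total-order-antichains} the new path-respecting antichains $Q_i$ may be much larger; for instance, if many antichain vertices sit at the same depth in $G[W]$, a single $Q_i$ can have size up to $n$. In that regime one iteration already costs more than $n^{\omega(1,x,1)}$, so ``$q$ iterations times $n^{\omega(1,x,1)}$'' does not yield the claimed bound.

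The paper handles this by bounding $\sum_i \textrm{MM}(n,|Q_i|,n)$ directly: sorting sizes decreasingly gives $|Q_i|\le n/i$ (since $\sum_i|Q_i|\le n$), and a dyadic grouping then shows $\sum_i \textrm{MM}(n,|Q_i|,n)\le (1+\log q)\cdot q\cdot \textrm{MM}(n,n/q,n)\in\widetilde\bigo(n^{1-x+\omega(1,x,1)})$. An equally valid fix in the spirit of your approach is to split any large $Q_i$ into $\lceil|Q_i|/n^x\rceil$ blocks of width at most $n^x$; the total number of $n\times n^x$ products is then $\sum_i\lceil|Q_i|/n^x\rceil\le n^{1-x}+q\le 3n^{1-x}$, each costing $\widetilde\bigo(n^{\omega(1,x,1)})$, which recovers the desired bound.
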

\begin{proof}
Consider the above algorithm (pseudo-code in Algorithm \ref{alg:antichain-restricted-lca}). Its running time is upper bounded by $\widetilde\bigo(\sum_{i=1}^q \textrm{MM}(n,|Q_i|,n))$. 
Assume w.l.o.g. that $|Q_i|$ is non-increasing, then $|Q_i| \le n/i$, and by monotonicity of  $\textrm{MM}(n,\cdot,n)$
\begin{align*}
\sum_{i=1}^q \textrm{MM}(n,|Q_i|,n) &\le \sum_{i=1}^q \textrm{MM}(n,n/i,n)\\
&\le \sum_{j=0}^{\log q} 2^j \textrm{MM}(n,n/2^j,n)\\
&\le (1+\log q) \cdot q \cdot \textrm{MM}(n,n/q,n)\\
&\in \widetilde\bigo(n^{1-x+\omega(1,x,1)}).
\end{align*}

For the correctness, assume by contradiction that for some pair of vertices $u,v$ the computed value $\lca_\cQ(u,v)$ is not correct. 
Notice that $\lca_\cQ(u,v)=-\infty$ iff $u$ and $v$ have no common ancestor in $\cQ$, hence we can assume w.l.o.g. $\lca_\cQ(u,v)=w$ for some $w$ in some $Q_i$. 
The contradiction implies that there exists a common ancestor $w'\in Q_j$ of $u$ and $v$ which is a descendant of $w$ (in particular, $(w,w')\in E$ since $G$ is a transitive closure). 
Notice that $j\neq i$ since $Q_i$ is an anti-chain. 
By construction $u$ and $v$ do not have any common ancestor in $Q_{i+1},\ldots,Q_q$ since otherwise at the time when $Q_i$ is considered we would have  $\lca_\cQ(u,v)\neq -\infty$. 
Hence it must be the case that $j<i$. 
This is a contradiction since the existence of the pair $w,w'$ shows that $\cQ$ is not path-respecting.
\end{proof}

\subsection{Patching the LCAs Together}
\label{sec:global}

Suppose we are given values $\lca_\cP(\cdot,\cdot)$ and $\lca_\cQ(\cdot,\cdot)$ as computed in previous sections. Let us also assume that vertices are labeled from $1$ to $n$ according to an \emph{arbitrary} topological ordering. The following approach to solve All-Pairs LCA might be tempting: for each pair $u,v\in V$, we simply set $\lca(u,v)=\max\{\lca_\cP(u,v),\lca_\cQ(u,v)\}$. Unfortunately this approach does not work, as illustrated in Figure \ref{fig:bad-topological-rder-example}.
Intuitively, the issue is that in the computation of $\lca_\cQ(u,v)$ the algorithm can return any vertex $w$ in some $Q_i$ which is a common ancestor of $u$ and $v$, not necessarily the one with largest index in $Q_i$. This flexibility is essential to achieve the claimed running time: computing $w$ with the largest index in $Q_i$ would require a \emph{max-witness} computation, and the best-known algorithms for the latter problem are substantially slower than Boolean matrix multiplication.

In order to circumvent this problem, we will compute (in $\bigo(n^2)$ time) a more structured topological order. Using this particular order rather than an arbitrary topological order, guarantees that the above approach works.
In particular, our goal is to define a topological order such that, if $\lca_{\mathcal{P}}(u,v)$ appears later than $\lca_{\mathcal{Q}}(u,v)$ in this order, then there is no path from $\lca_{\mathcal{P}}(u,v)$ to any $\mathcal{Q}$-restricted LCA for $u,v$, and vice~versa.

\begin{definition}
Let $G=(V,E)$ be a transitive closure graph and $\mathcal{Q}=\{Q_1, \dots, Q_q\}$ a path-respecting family of antichains of $G$. A $\mathcal{Q}$-compact topological order of the vertices in $V$ is a topological order such that all vertices in an antichain $Q_i\in \mathcal{Q}$ appear consecutively and a vertex in $Q_i$ appears earlier than a vertex in $Q_j$, for $i<j$. \end{definition}
\begin{lemma}
\label{lem:respecting}
Given a transitive closure graph $G=(V,E)$ and a path-respecting family of antichains $\mathcal{Q}$, we can compute a $\mathcal{Q}$-compact topological order of $G$ in time~$\bigo(n^2)$.
\label{lemma:q-compact-topological-order-computation}
\end{lemma}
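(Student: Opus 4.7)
The plan is to produce the desired order by assigning every vertex a two-component integer key and then sorting lexicographically. First, in $\bigo(n^2)$ time compute an arbitrary topological order $\sigma$ of $G$ (which will serve only as a tie-breaker). Second, for every vertex $v \notin \bigcup_i Q_i$ compute
\[
r(v) \;:=\; \max\{\, i \,:\, Q_i \cap \mathrm{Anc}(v) \neq \emptyset\,\},
\]
with $r(v)=0$ if no such $i$ exists. Since $G$ is stored as a transitive closure, the ancestors of $v$ can be read off the row of $v$ in the adjacency matrix and filtered by the antichain labels in $\bigo(n)$ time per vertex, so all values $r(\cdot)$ are produced in $\bigo(n^2)$ total. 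Finally, define the slot $s(v) = 2i$ if $v \in Q_i$, and $s(v) = 2r(v)+1$ otherwise, and output the vertices in lexicographic order of $(s(v), \sigma(v))$.

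The $\mathcal{Q}$-compactness requirement is immediate from the construction: every $v \in Q_i$ receives $s(v)=2i$, no vertex outside $Q_i$ carries this value, and the even slots $2, 4, 6, \dots$ are visited in the required order $Q_1, Q_2, \dots, Q_q$. The free vertices land in the odd slots interleaved between them, so nothing breaks the run of any $Q_i$.

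The topological-order property is then proved by case analysis on an arbitrary edge $(u,v)\in E$. If $u\in Q_i$ and $v\in Q_j$, then path-respecting gives $i<j$ and hence $s(u)<s(v)$. If $u\in Q_i$ and $v\notin \bigcup_k Q_k$, then $u$ itself witnesses $r(v) \geq i$, so $s(v) \geq 2i+1 > 2i = s(u)$. If $u\notin \bigcup_k Q_k$ and $v\in Q_j$, then any $w\in Q_{r(u)}$ ancestor of $u$ is also an ancestor of $v$ by transitivity, and path-respecting forces $r(u)<j$; hence $s(u) = 2r(u)+1 \leq 2j-1 < 2j = s(v)$. If neither endpoint lies in an antichain, transitivity yields $\mathrm{Anc}(u)\subseteq \mathrm{Anc}(v)$ and therefore $r(u) \leq r(v)$, so $s(u) \leq s(v)$; when equality holds the secondary key $\sigma$, itself a topological order of $G$, puts $u$ before $v$.

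The only real subtlety, and the step I would be most careful about, is designing the slot function so that the ``free'' vertices interleave consistently with the blocks $Q_1,\dots,Q_q$; the odd/even split together with the ancestor-max definition of $r(\cdot)$ is what makes this mesh with the path-respecting property. The remaining work is routine: reading the adjacency matrix, bucketing by $s(v)$, and tie-breaking by $\sigma$ all run in $\bigo(n^2)$ time, which matches the claimed bound.
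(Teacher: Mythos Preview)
Your proof is correct, and it takes a genuinely different route from the paper. The paper builds the order by an iterative peeling process: in round $i$ it appends the vertices of $Q_i$ (in arbitrary order), then repeatedly extracts and appends the sources of $G[W\setminus\bigcup_{j>i}Q_j]$ until none remain, tracking in-degrees to get the $\bigo(n^2)$ bound. Correctness is then argued by contradiction on a putative back edge.

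Your approach is more declarative: you compute a single integer key $s(v)$ per vertex (even slot $2i$ for $v\in Q_i$, odd slot $2r(v)+1$ otherwise, where $r(v)$ is the largest antichain index appearing among the ancestors of $v$) and sort lexicographically with an auxiliary topological order $\sigma$ as tie-breaker. The four-case analysis cleanly verifies that every edge respects this order, and the path-respecting property is exactly what makes Cases~1 and~3 go through. The construction is arguably simpler to state and to verify than the peeling argument; the price you pay is the need for the auxiliary order $\sigma$ to break ties among free vertices with the same $r$-value, whereas the paper's peeling produces the order directly. Both approaches exploit the transitive closure explicitly (you to read off ancestors in $\bigo(n)$ per vertex, the paper implicitly via the in-degree updates), and both hit the same $\bigo(n^2)$ bound.
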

\begin{proof}
For notational convenience, let us define a dummy set $Q_0=\emptyset$. 
The algorithm proceeds in rounds. At the beginning of round $i\geq 0$ we are given a current subset of vertices $W$ and a partial topological ordering $R$ (implemented as list) of the remaining vertices $V\setminus W$. Initially $W=V$ and $R$ is empty. During round $i$ we append the vertices of $Q_i$ to $R$ in any order and remove them from $W$. Then we iteratively identify the sources $S_i$ in $G[W\setminus \left(\bigcup_{i < j \leq q} Q_j\right)]$, append the vertices $S_i$ to $R$ in any order, and finally remove them from $W$.

In order to implement the above algorithm in $\bigo(n^2)$ time, we can use an approach similar to the proof of Lemma \ref{lemma:total-order-antichains}, where we keep track of vertices whose in-degree becomes zero after the removal of other vertices.  

For the correctness, trivially by construction the indexes of the vertices in the same anti-chain $Q_i$ are consecutive, and the indexes in $Q_i$ are smaller than the indexes in $Q_j$ for $j>i$. Hence it remains to show that $R$ defines a topological order at the end of the algorithm. Suppose by contradiction that there exists an edge $(x,y)\in E$ such that $x$ is placed after $y$ in $R$. Let $y\in S_i\cup Q_i$ for some $i$. Assume by contradiction that $y\in S_i$. Then it must be the case that $x\in Q_i$ or $x$ was removed in some earlier iteration. Indeed otherwise $y$ would not be a source. In both cases $x$ would appear earlier than $y$ in $R$. It therefore remains to consider the case $y\in Q_i$, $i\geq 1$. Assume that $x\in Q_j$ for some $j$. The fact that $\cQ$ is path-respecting implies that $j<i$. This means that $x$ is added to $R$ in some earlier iteration, a contradiction. So the remaining case is that $x\in S_j$ for some $j\geq i$. Notice that vertices in $S_j$ become sources right after the removal of vertices in $Q_j$ (otherwise they would be sources at some earlier round). In particular, there must exist some parent $w$ of $x$ in $Q_j$. Notice that $w\in Q_j$ is an ancestor of $y\in Q_i$ (hence $(w,y)\in E$), and $j\geq i$. This contradicts the fact that $\cQ$ is path-respecting. 
\end{proof}

\begin{algorithm}[t]
%\SetAlgoLined
\textbf{Input:} {DAG $G_{input}=(V,E_{input})$}\\
\textbf{Output:} {$\lca(u,v)$ for each pair of vertices $u,v\in V$}

\begin{algorithmic}[1]
    \STATE Compute the transitive closure graph $G=(V,E)$ of $G_{input}$\;

    \STATE Use Algorithm \ref{alg:decomposition}
    to compute a $(n^x, 2n^{1-x})$-decomposition into a family of chains $\mathcal{P}=\{P_1, \dots, P_p\}$ with $p \leq n^x$  and a family of antichains $\mathcal{Q'} = \{Q'_1, \dots, Q'_{q'}\}$ with $q' \leq 2n^{1-x}$.\;
    
    \STATE Use Lemma \ref{lemma:total-order-antichains} with input $\mathcal{Q'}$ to compute a path-respecting family of antichains $\mathcal{Q}=\{Q_1,\dots, Q_{q}\}$ of $G$ where $q \leq 2n^{1-x}$.\;

    \STATE Compute a $\mathcal{Q}$-compact topological order of $G$ using Lemma \ref{lemma:q-compact-topological-order-computation} and rename vertices so that they are $1,\ldots,n$ according to this order\;

    \STATE Use Algorithm \ref{alg:chain-restricted-lca} to compute $\mathcal{P}$-restricted LCA $\lca_{\mathcal{P}}(u,v)$ for each pair of vertices $u,v\in V$\;

    \STATE Use Algorithm \ref{alg:antichain-restricted-lca} to compute $\mathcal{Q}$-restricted LCA $\lca_{\mathcal{Q}}(u,v)$ for each pair of vertices $u,v\in V$\;
 
    \FOR{\textbf{all} $u,v \in V, u\not=v$}
        \STATE $\lca(u,v) \leftarrow \max\{\lca_{\mathcal{Q}}(u,v), \lca_{\mathcal{P}}(u,v)\}$\;
    \ENDFOR
\end{algorithmic}
 \caption{Compute $\lca_{\mathcal{V}}(u,v)$ for all pairs of vertices $u,v\in V$.}
 \label{alg:All-Pairs-global-lca}
\end{algorithm}

This concludes the description of our algorithm for the All-Pairs LCA problem in DAGs (see also the pseudo-code in Algorithm \ref{alg:All-Pairs-global-lca}).

\begin{theorem}[Main Theorem]
\label{th:runtimebyx}
All-Pairs LCA in DAGs can be solved in time $\widetilde\bigo( n^{\gamma})$, where $\gamma = 1 + 2x$ and $x$ is the solution of the equation $3x=\omega(1,x,1)$. 
\end{theorem}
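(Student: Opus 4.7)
The plan is to run Algorithm \ref{alg:All-Pairs-global-lca} and separately verify (i) its correctness and (ii) its running time. For correctness, the key claim is that, once we label the vertices according to a $\mathcal{Q}$-compact topological order (Lemma \ref{lem:respecting}), setting $\lca(u,v)=\max\{\lca_{\mathcal{P}}(u,v),\lca_{\mathcal{Q}}(u,v)\}$ actually returns a true LCA of $u,v$. The high-level justification is that every common ancestor of $u,v$ is covered by either $\bigcup \mathcal{P}$ or $\bigcup \mathcal{Q}$ (since the decomposition spans $V$), so some true LCA is either chain-type or antichain-type; what must be ruled out is the bad case where the combination picks a chain-type candidate that is strictly above some antichain-type true LCA (or vice versa).

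The correctness argument I intend to write proceeds by contradiction: assume the returned vertex $w=\max\{w_{\mathcal{P}},w_{\mathcal{Q}}\}$ is not an LCA of $u,v$, so there is a common ancestor $w'$ which is a proper descendant of $w$. Since $G$ is a transitive closure, $(w,w')\in E$ and so $w$ precedes $w'$ in every topological order, in particular $w<w'$. If $w'\in\bigcup\mathcal{P}$, I use Lemma \ref{lemma:chain-restricted-lca}, which guarantees $\lca_{\mathcal{P}}(u,v)$ is the common ancestor of $u,v$ in $\bigcup\mathcal{P}$ with the largest index, so $w'\le w_{\mathcal{P}}\le w$, contradicting $w<w'$. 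If $w'\in Q_j$ for some antichain $Q_j$, I first use the fact that Algorithm \ref{alg:antichain-restricted-lca} scans antichains from $Q_q$ down to $Q_1$, so if $w_{\mathcal{Q}}\in Q_i$ then no $Q_{j'}$ with $j'>i$ contains a common ancestor of $u,v$, forcing $j\le i$. The case $j<i$ is ruled out by $\mathcal{Q}$-compactness, because then $w'$ has index strictly below all of $Q_i$, so $w'<w_{\mathcal{Q}}\le w$. The remaining case $j=i$ splits depending on whether $w\in\bigcup\mathcal{P}$ or $w=w_{\mathcal{Q}}\in Q_i$; in the latter the antichain property rules out $(w,w')\in E$, and in the former $\mathcal{Q}$-compactness again forces $w$'s index to lie strictly above every index of $Q_i$ (because $w$ is outside $\bigcup\mathcal{Q}$ and $w>w_{\mathcal{Q}}$), contradicting $w<w'$ with $w'\in Q_i$. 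The last sub-case is the main obstacle: it is the only place where $\mathcal{Q}$-compactness is truly needed, and handling it is precisely why we invested in the path-respecting and compact orderings from Section \ref{sec:global}.

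For the running time, I would add up the costs of the individual steps already proved earlier: transitive closure in $\bigo(n^\omega)$; the $(n^x,2n^{1-x})$-decomposition in $\bigo(n^2)$ by Theorem \ref{thr:mainDecomposition}; the reorganization of $\mathcal{Q}'$ into a path-respecting family and the $\mathcal{Q}$-compact relabeling, both $\bigo(n^2)$ by Lemmas \ref{lemma:total-order-antichains} and \ref{lemma:q-compact-topological-order-computation}; the $\mathcal{P}$-restricted LCA computation in $\widetilde\bigo\bigl(n^{(\omega(1,x,1)+2+x)/2}\bigr)$ by Lemma \ref{lemma:chain-restricted-lca}; the $\mathcal{Q}$-restricted LCA computation in $\widetilde\bigo(n^{1-x+\omega(1,x,1)})$ by Lemma \ref{lemma:antichain-restricted-lca}; and a final $\bigo(n^2)$ patching loop. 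The total is
\[
\widetilde\bigo\bigl(n^{\omega}+n^{(\omega(1,x,1)+2+x)/2}+n^{1-x+\omega(1,x,1)}\bigr).
\]

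Finally, I substitute the balancing equation $3x=\omega(1,x,1)$, which collapses both of the matrix-multiplication terms to $n^{1+2x}$ (the antichain term becomes $n^{1-x+3x}=n^{1+2x}$ and the chain term becomes $n^{(3x+2+x)/2}=n^{1+2x}$), and check that $n^{\omega}$ is absorbed since the numerical solution $x\approx 0.7233$ makes $1+2x\approx 2.447>\omega$. This yields the claimed bound $\widetilde\bigo(n^{1+2x})=\widetilde\bigo(n^{\gamma})$, completing the proof.
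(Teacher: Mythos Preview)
Your proposal is correct and follows essentially the same approach as the paper. Two minor differences worth noting: the paper organizes the correctness argument by splitting on whether $w=\lca_{\mathcal P}(u,v)$ or $w=\lca_{\mathcal Q}(u,v)$ (which immediately forces $w'$ into the \emph{other} family by the definition of restricted LCA), yielding a slightly shorter case analysis than your split on where $w'$ lives; and for absorbing the $n^\omega$ term the paper uses the parameter-free inequality $\omega\le 1-x+\omega(1,x,1)$ (valid for all $x\in[0,1]$), which with $\omega(1,x,1)=3x$ gives $\omega\le 1+2x$ directly, whereas your numerical check is tied to current bounds on $\omega$.
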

\begin{proof}
Consider the above All-Pairs LCA algorithm (for pseudo-code see Algorithm \ref{alg:All-Pairs-global-lca}). The running time of the algorithm is  $\widetilde\bigo(n^{\omega} + n^{\frac{\omega(1,x,1)+2+x}{2}}+n^{1-x + \omega(1,x,1)})$ for a fixed $x\in [0,1]$. The claimed running time is obtained by imposing $\frac{\omega(1,x,1)+2+x}{2}=1-x + \omega(1,x,1)$, and observing that  $\omega \le \omega(1,x,1)+1-x$ for any $x \ge 0$.

For the correctness assume by contradiction that $w=\lca(u,v)$ is not a correct answer.
Notice that if $u$ and $v$ have no common ancestor, by construction $w=-\infty$ and the answer is correct. 
So we can assume that $w$ is an index of some vertex. 
Assume first $w=\lca_{\cP}(u,v)$. 
By contradiction assume that $w'$ is some descendant of $w$ which is also a common ancestor of $u$ and $v$. 
Notice that $w'>w$ since we consider a topological order. 
The correctness of the $\cP$-restricted All-Pairs LCA algorithm implies that $w'$ is contained in $\cQ$. 
In particular $w'\in Q_i$ for some $i$. 
Since the considered topological order is $\cQ$-compact, all vertices in $Q_i$ appear after $w$ in the topological order (in particular, they have larger indices than $w$). 
Since $Q_i$ contains at least one common ancestor of $u$ and $v$ (namely, $w'$), by construction $\lca_{\cQ}(u,v)$ is contained in $Q_j$ for some $j\geq i$. 
Since the topological order is $\cQ$-compact, this implies $\lca_{\cQ}(u,v)> w$. 
Hence we get a contradiction $w = \lca_{\cP}(u,v)\geq \lca_{\cQ}(u,v)>w$. 

The case that $w=\lca_{\cQ}(u,v)$ is symmetric. 
In particular, any descendant $w'$ of $w$ which is a common ancestor of $u$ and $v$ must be contained in $\cP$, and $w'>w$. 
By construction we have $\lca_{\cP}(u,v) \geq w'$. 
Hence we get a contradiction $w = \lca_{\cQ}(u,v) \geq \lca_{\cP}(u,v)\geq w' > w$.
\end{proof}

\begin{figure}
    \centering
    \includegraphics[width=0.33\textwidth]{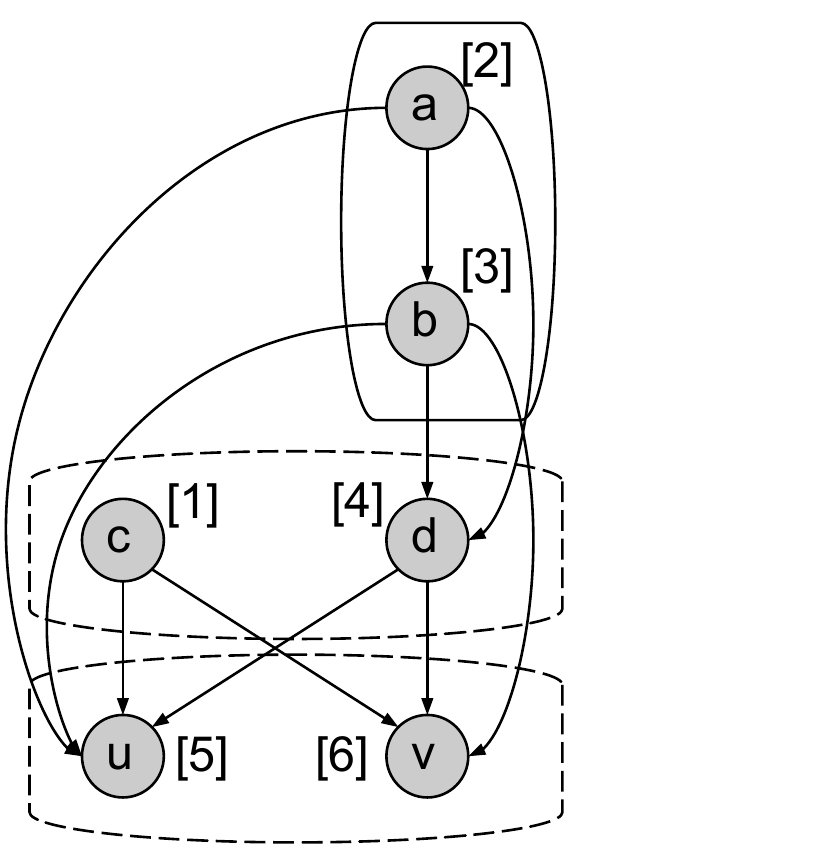}
    \caption{An example of a topological order in a transitive closure graph $G$ that is not suitable for combining the $\mathcal{Q}$-restricted LCA and the $\mathcal{P}$-restricted LCA for the pair of vertices $u,v$. The family of chains in $G$ is $\mathcal{P}=\{\{a,b\}\}$ and the path-respecting family of antichains is $\mathcal{Q}=\{\{c,d\},\{u,v\}\}$. The index of each vertex in the topological order is in brackets next to each vertex. It might happen that $\lca_\cP(u,v)=b$ and $\lca_\cQ(u,v)=c$, in which case the algorithm would return the incorrect answer $\lca(u,v)=b$. }
    \label{fig:bad-topological-rder-example}
\end{figure}

\section{Missing proofs from Section \ref{sec:chainantichain}.}
\label{seclaterproofs}
\begin{lemma}\label{lem:invariant}
After each execution of $\procinsert()$, $\procdelete()$ or $\procmove()$, Invariant \ref{inv:update} is satisfied.
\end{lemma}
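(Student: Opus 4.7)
The plan is to proceed by induction on the number of $\procinsert$, $\procdelete$, and $\procmove$ operations executed so far. The base case is trivial: before any operation $G'=\emptyset$ and all $L'_i$ are empty, so every clause holds vacuously. The inductive step handles each operation type separately and verifies clauses 1--5 in turn. Clauses 1--4 are per-operation statements whose preservation is largely local bookkeeping on the pseudocode, while clause 5 is a conditional statement (it only needs to hold right before an $\procinsert$), and therefore requires a separate argument about how the inner while-loop on $\DEL\cup\MOVE$ restores stability by the end of each outer iteration.

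For $\procinsert(v_t)$, the key inputs are the two conditional parts of the preceding invariant: all $L'_i$ have size at most $\ell-1$ and the algorithm is stable. The scan from $j=h-1$ downward places $v_t$ at $L'_{j+1}$ where $j$ is the highest layer with a parent of $v_t$ in $G'$ (or $v_t$ lands in $L'_1$ if there is none), immediately yielding clauses 1, 3, and 4 for $v_t$; clause 2 follows since one layer's size grows by one to at most $\ell$. Crucially, because $v_t$ is last in the topological prefix considered, it cannot be a parent of any vertex already in $G'$, so no other vertex's invariants can be broken. For $\procdelete(v)$, clauses 1--4 are preserved trivially as only removals occur, and any child $w$ of $v$ whose $L_{\nextrm}(w)$ becomes empty is added to $\MOVE$ for later repair; clause 5 is not required. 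For $\procmove(v)$, the priority rule that $\DEL$ is drained before $\MOVE$ guarantees that at the moment of the move every layer has size at most $\ell-1$, after which the analysis mirrors the $\procinsert$ case, using additionally the children-scan of $L'_{h(v)+1}$ (described in the prose) to keep the $L_{\nextrm}/L_{\prevrm}$ pointer lists consistent following relocation.

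The core of the proof is establishing clause 5 at the moment $\DEL\cup\MOVE$ becomes empty. My plan is to carry an auxiliary invariant through the iteration: \emph{every $v\in G'$ with $h(v)\geq 2$ and $L_{\nextrm}(v)=\emptyset$ is currently present in $\MOVE$ or $\DEL$}. Both $\procdelete$ and $\procmove$ explicitly scan the $L_{\prevrm}$ of the affected vertex and enqueue into $\MOVE$ each $w$ whose $L_{\nextrm}(w)$ has just been emptied; so the very moment a vertex loses its last tracked parent in the immediate lower layer, it is scheduled for repair. Termination of the inner loop follows from a potential argument: each $\procmove(v)$ strictly decreases $h(v)$, and each $\procdelete$ strictly decreases $|G'|$, so the combined potential is bounded and the while-loop halts. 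At termination, the auxiliary invariant forces every $v$ with $h(v)\geq 2$ to satisfy $L_{\nextrm}(v)\neq\emptyset$, which is exactly clause 5. The sub-clause ``$L'_h=\emptyset$ right before the next $\procinsert$'' holds because only the initial insert of an iteration can populate $L'_h$ (moves go strictly downward), and whenever it does the chain extraction adds all of $L'_h$ to $\DEL$ within the same iteration.

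The main obstacle I anticipate is the interplay between the pointer bookkeeping of $L_{\nextrm}/L_{\prevrm}$ and the stability argument: one must verify that the auxiliary invariant survives every intermediate moment of $\procmove$, in particular the transient state after $v$ has been removed from its old layer but before it is placed into the new one, and that the rescan of $L'_{h(v)+1}$ for new children of $v$ correctly re-wires those children so that any future removal of $v$ will again enqueue them into $\MOVE$. Verifying that no vertex silently drops out of the ``stable or enqueued'' dichotomy is what turns the case analysis into a rigorous proof.
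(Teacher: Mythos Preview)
Your overall strategy matches the paper's: induction on the number of operations, with clause~5 handled via the auxiliary invariant that every $v\in G'$ at level $\geq 2$ with $L_{\nextrm}(v)=\emptyset$ is currently in $\MOVE\cup\DEL$. The paper states and uses exactly this invariant in its proof of~(5).

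However, there is a genuine gap in your treatment of clause~3 for $\procmove(v)$. You write that ``the analysis mirrors the $\procinsert$ case,'' but it does not. In $\procinsert(v_t)$ the scan starts at layer $h-1$ and therefore locates the globally highest layer containing a parent of $v_t$, so placing $v_t$ one layer above that parent automatically yields clause~3. In $\procmove(v)$ with $v\in L'_i$, the scan starts only at layer $i-2$; if $v$ still had a parent sitting in $L'_{i-1}$ at the moment $\procmove(v)$ runs, the procedure would place $v$ at some level $\leq i-1$, i.e.\ at or below that parent, violating clause~3 (and hence clause~4). The paper singles this out as ``the most delicate claim'' and argues as follows: when $v$ was added to $\MOVE$ by some $\procdelete(w)$ or $\procmove(w)$ with $w\in L'_{i-1}$, the list $L_{\nextrm}(v)$ had just been emptied, so $v$ had no parent in $L'_{i-1}$ at that instant; between then and the actual execution of $\procmove(v)$, no parent of $v$ can appear in $L'_{i-1}$, because every intermediate $\procmove$ only lowers levels (and by clause~3 all parents of $v$ were already at level $\leq i-2$), every $\procdelete$ only removes vertices, and any intervening $\procinsert$ adds a vertex that comes after $v$ in topological order, hence cannot be an ancestor of $v$. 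Without this argument, the correctness of starting the scan at $i-2$ is unjustified, and clauses~3 and~4 do not follow.

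Your anticipated ``main obstacle'' (the pointer bookkeeping behind the auxiliary invariant for clause~5) is real but comparatively routine once the above is established; the paper dispatches it in a few lines. The step you actually missed is the one the paper itself isolates as the delicate one.
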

\begin{proof}
We prove the claim by induction on the number of operations. In particular we will assume that the considered operation is the $k$-th one, and the invariant holds before its execution. Notice that the invariant is trivially satisfied before the first execution of any such operation (when $G'$ and the layers $L'_i$ are empty).

\smallskip\noindent(1) Consider the first part of the claim. Clearly if the $k$-th operation is $\procdelete(v)$ or $\procmove(v)$, the claim holds. In case of $\procinsert(v)$, the inductive hypothesis guarantees that all parents of $v$ in $G'$ are in level $h-1$ or lower. Hence $v$ is inserted in layer $h$ or lower. For the second part of the claim, assume inductively that $L'_h$ is empty before the execution of some $\procinsert()$ (this is true at the beginning). Observe that the only operation that can add some vertex $w$ to $L'_h$ is $\procinsert(w)$. Vertex $w$ is deleted right after the $\procinsert(w)$ operation, since we always test whether $L'_{h}\neq \emptyset$ and, if so, the algorithm retrieves and removes (the reverse of) a path that is traversed starting from $w$ and following a parent of each visited vertex (this path always includes $w$). Furthermore $\procdelete()$ and $\procmove()$ never add vertices to $L'_h$. Hence before the execution of the next $\procinsert()$ the set $L'_h$ is empty as required.     

\smallskip\noindent (2) Clearly if the $k$-th operation is $\procdelete(v)$ the claim holds. If the $k$-th operation is $\procmove(v)$ or $\procinsert(v)$ by inductive hypothesis at most one layer $L'_i$ can reach size $\ell$, while all other layers have the same or smaller size after the operation. Notice that we give priority to the $\procdelete()$ operations over the $\procmove()$ operations, and hence once $|L'_{i}|=\ell$ and all vertices of $L'_i$ are inserted into $\DEL$, no further points are inserted into $L'_i$ until it is fully empty. Thus, all vertices of $L'_i$ are deleted before the next execution of a $\procmove()$ or $\procinsert()$. The claim then holds.  

\smallskip\noindent (3) This is the most delicate claim. The claim trivially holds if the $k$-th operation is $\procdelete(v)$, and it holds by construction if it is $\procinsert(v)$. Next assume that the $k$-th operation is $\procmove(v)$, and let $v\in L'_i$ at the time of its execution. By inductive hypothesis all parents of $v$ are of level at most $i-1$ at that time, and the procedure considers all the parents of $v$ in $G'$ of level at most $i-2$. Hence assume by contradiction that $v$ has some parent in $G'$ of level $i-1$ when $\procmove(v)$ is executed (in which case the invariant is violated). Suppose that the operation that added $v$ to $\MOVE$ is the $k'$-th one, $k'<k$. Observe that this operation is either a $\procmove(w)$ or a $\procdelete(w)$ for some $w\in L'_{i-1}$. Since by assumption $v\in G'$ at the time of execution of $\procmove(v)$, by construction the level of $v$ remains $i$ during all the intermediate operations $k'+1,\ldots,k-1$. Furthermore, $L'_{i-1}$ does not contain any parent of $v$ before the execution of the first such intermediate operation, hence at that time the parents of $v$ are in level $i-2$ or lower. Therefore,  any intermediate operation which is a $\procmove()$ or $\procdelete()$ keeps the invariant that the parents of $v$ are in level $i-2$ or lower as every $\procmove()$ operation can only decrease the level of a vertex. Any intermediate operation which is an $\procinsert()$ cannot add a parent of $v$ at all, since vertices are inserted in topological order. Hence $L'_{i-1}$ does not contain parents of $v$ at the time of the execution of $\procmove(v)$, a contradiction.

\smallskip\noindent (4) This case trivially follows by case (3) of the invariant.

\smallskip\noindent  (5) By induction, the invariant was satisfied right before the last execution of $\procinsert(v)$. We claim that after any operation the list $\MOVE$ contains all vertices for which this invariant is not satisfied. Right before the last $\procinsert(v)$ operation, $\MOVE = \emptyset$. The $\procinsert(v)$ operation simply adds $v$ either to layer $L'_{1}$ if there is no parent of $v$ in $G'$, or to layer $L_{h(v)}$ such that $L_{h(v)-1}$ contains a parent of $v$. Clearly, the  claim holds as $\MOVE = \emptyset$ and the invariant is satisfied for $v$. 
Consider now a $\procdelete(v)$ operation on a vertex $v\in L'_{i}$. The additional vertices that violate the invariant after this operation are the vertices $u \in L'_{i+1}$ whose only parent in $L'_{i}$ is $v$. Recall, we keep track of the parent of $u$ in $L'_{i}$ in the list $L_{\nextrm}(w)$. Since $delete(v)$ tests whether $L_{\nextrm}(u) = \emptyset$ after removing $v$ from $L'_i$ for all children $u$ of $v$ in $L'_{i+1}$, all these vertices are correctly inserted to $\MOVE$. Thus, the claim holds also after a $\procdelete(v)$ operation. Finally, we consider the case of a $\procmove(v)$ operation. The $\procmove(v)$ first removes a vertex $v$ from a list $L_i$. At the first stage of $\procmove(v)$, similarly to the $\procdelete(v)$ operation, the additional vertices that violate the invariant are the vertices $u \in L'_{i+1}$ whose only parent in $L'_{i}$ is $v$. Arguing in the exact same way as the $\procdelete(v)$ operation, all the additional vertices that violate the invariant (i.e., the ones that are not already in $\MOVE$) are correctly added to $\MOVE$. To complete our claim, notice that (similarly to $\procinsert()$) $\procmove(v)$ adds $v$ to either adds $v$ to layer $L'_{1}$ if there is no parent of $v$ in $G'$, or to $L'_{h(v)}$ such that $L'_{h(v)-1}$ contains a parent of $v$. Hence, the invariant is satisfied for $v$ after $\procmove(v)$, and therefore our claim holds. 
The proof of the invariant follows by the fact that $\MOVE=\emptyset$ right before an $\procinsert(v)$ operation.
\end{proof}

\begin{lemma}\label{lem:timeDecomposition}
The above algorithm (pseudo-code in Algorithm \ref{alg:decomposition}) takes $\bigo(n^2)$ time.
\end{lemma}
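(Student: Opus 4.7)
The plan is to bound separately the total cost of the procedures $\procinsert()$, $\procdelete()$, and $\procmove()$, and to show each contributes $O(n^2)$. Throughout I will use Invariant~\ref{inv:update}, which guarantees $|L'_i| \leq \ell$ for every layer at all times, together with the fact that $h \cdot \ell = O(n)$. I also assume the adjacency matrix of $G$ is available, so each edge query takes $O(1)$; computing a topological order of $G$ on this representation is an $O(n^2)$ one-time preprocessing step.

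First, a single call to $\procinsert(v)$ scans the layers $L'_{h-1}, L'_{h-2}, \ldots$ in decreasing order, stops at the highest layer containing a parent of $v$, and performs $O(1)$ work per scanned vertex; the worst-case per-call cost is thus $O(h\ell) = O(n)$. Since exactly one insert is performed per outer iteration and there are $O(n)$ iterations, the total insertion cost is $O(n^2)$. A call to $\procdelete(v)$ only touches $L_{\nextrm}(v) \subseteq L'_{h(v)-1}$ and $L_{\prevrm}(v) \subseteq L'_{h(v)+1}$ and thus costs $O(\ell)$; each vertex is deleted at most once, giving total delete cost $O(n\ell) = O(n^2)$.

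The main obstacle is $\procmove()$: a single vertex may be moved up to $h-1$ times, so the naive per-call bound of $O(n)$ would accumulate to a prohibitive $O(n^2 h)$. The key idea is an amortization over the sequence of moves applied to each fixed vertex. A call $\procmove(v)$ that relocates $v$ from level $i$ to a strictly lower level $i'<i$ costs $O((i-i'+1)\,\ell)$: step~4 scans only the layers $L'_{i-2}, L'_{i-3}, \ldots, L'_{i'-1}$ (halting as soon as a parent of $v$ is discovered, which is guaranteed by Invariant~\ref{inv:update}(3)), contributing $O((i-i')\ell)$, while the bookkeeping in the remaining steps touches only $L_{\prevrm}(v)$ and one adjacent layer, contributing $O(\ell)$. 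If $v$ is inserted at level $j_0$ and then relocated through $j_0 > j_1 > \cdots > j_{m_v} \geq 1$, the sum $\sum_{k=1}^{m_v}(j_{k-1}-j_k+1) = (j_0 - j_{m_v}) + m_v \leq 2h$ telescopes, since $m_v \leq h-1$ (each move strictly decreases the level and levels lie in $[1,h]$). Hence the total cost of all $\procmove()$ calls on a single vertex is $O(h\ell) = O(n)$, and summing over the $n$ vertices gives $O(n^2)$.

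Finally, the remaining bookkeeping is dominated by what we have already counted. Extracting each chain $C_t$ requires $O(h)$ steps of following parent pointers, and at most $\ell$ chains are produced, for $O(\ell h) = O(n)$ total; each of the at most $2n/\ell$ antichains output is enumerated in $O(\ell)$ time, for $O(n)$ total; the membership vector for $G'$, the layer sizes $|L'_i|$, and the sets $\DEL$ and $\MOVE$ are maintained in $O(1)$ per event, and the total number of additions into $\MOVE$ is absorbed into the $O(\ell)$-per-call bounds above. Summing every contribution yields the claimed $O(n^2)$ running time.
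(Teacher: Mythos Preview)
Your proof is correct and follows essentially the same approach as the paper: bound $\procinsert()$ by $O(n)$ per call, $\procdelete()$ by $O(\ell)$ per call, and amortize the cost of all $\procmove()$ calls on a fixed vertex $v$ by charging $O((i-i')\ell)$ to a move from level $i$ to level $i'$ and observing that the levels telescope to $O(h\ell)=O(n)$. Your explicit telescoping sum $\sum_k(j_{k-1}-j_k+1)=(j_0-j_{m_v})+m_v\le 2h$ spells out what the paper summarizes as ``a simple sum argument,'' but the argument is the same.
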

\begin{proof}
The running time is dominated by the execution of the operations $\procinsert()$, $\procdelete()$ and $\procmove()$. We execute $\procdelete(v)$ at most once on each $v\in V$. Assume $v\in L'_i$ at time of execution. This operation requires to remove $v$ from $|L_{\prevrm}(v)|$ lists $L_{\nextrm}(w)$ of vertices $w\in L'_{i+1}$: notice that we maintain pointers to the occurrence of $v$ in each of these lists $L_{\nextrm}(w)$, hence this operation can be performed in time $\bigo(\ell)$ since by Invariant \ref{inv:update} $|L_{\prevrm}(v)|\leq |L'_{i+1}|\leq \ell$. For $i\geq 1$, we also need to remove $v$ from the list $L_{\prevrm}(u)$ of some $u\in L_{i-1}$. The same invariant guarantees that $|L_{\prevrm}(u)|\leq |L'_{i}|\leq \ell$, and the fact that we store pointers of the occurrence of $v$ in each of these lists $L_{\prevrm}(u)$, and hence this step also takes $\bigo(\ell)$ time. Thus the total cost of $\procdelete()$ operations is $\bigo(n\ell)$. 

Similarly, $\procinsert(v)$ is executed at most once on each $v\in V$, and this operation can be easily performed in time $\bigo(n)$. Hence the total cost of  $\procinsert()$ operations is~$\bigo(n^2)$. 

It remains to consider the cost of $\procmove()$ operations. Let us focus on the operations of type $\procmove(v)$ for a specific vertex $v$ (notice that the same vertex $v$ can be moved multiple times). Assume $v\in L'_i$ at that time, and $v$ is moved to layer $L'_j$. Recall that by construction $j<i$. Similarly to the $\procdelete(v)$ case, we spend $\bigo(\ell)$ time to remove $v$ from affected lists $L_{\nextrm}(w)$, $w\in L'_{i+1}$, and $L_{\prevrm}(u)$, $u\in L'_{i-1}$. Analogously, we spend $\bigo(\ell)$ time to create the new list $L_{\nextrm}(v)$ and $\bigo(1)$ time to update $L_{\prevrm}(u)$ for some $u\in L'_{j-1}$. The rest of the operations can be easily performed in time $\bigo(\ell)$ for each level between $i+2$ and $j-1$.  Hence the cost of this operation $move(v)$ is $\bigo((j-i)\ell)$. Since the largest possible level of a vertex $v$ on which we execute $\procmove(v)$ is $h-1$, a simple  sum argument shows that the total cost of $\procmove(v)$ operations involving the same vertex $v$ is $\bigo(h \ell)=\bigo(n)$. Hence the total cost of  $\procmove()$ operations is $\bigo(n^2)$.  
\end{proof}

\section{Rectangular $\max,\min$-product}\label{sec:recangular-max-min-product}
Recall the definition of dominance product of two matrices $A,B$: $(A \ovee B)[i,j] = \max_k \min(A[i,k],B[k,j])$. In the following, we find useful the dominance product: $A \olessthan B$ as $(A \olessthan B)[i,j] = | \{k : A[i,k] < B[k,j] \} |$.

\begin{lemma}[c.f. Thm 3.1 in \cite{DBLP:conf/soda/DuanP09}]
\label{lem:sparsedominance}
If $A$ and $B$ are respectively $n \times p$ and $p \times n$ matrices with $m_1$ and $m_2$ non $(-\infty)$ elements, then $A \olessthan B$ can be computed in time $\widetilde\bigo(MM(n,p,n) + m_1m_2/p)$.
\end{lemma}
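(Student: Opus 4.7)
The plan is to adapt Matou\v{s}ek's classical dominance-product algorithm to the sparse rectangular setting, in the spirit of Duan and Pettie's Theorem~3.1, using a recursive reduction of the value range.

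Let $c_A(k):=|\{i : A[i,k]\neq -\infty\}|$ and $c_B(k)$ be defined analogously, so that $\sum_k c_A(k)=m_1$ and $\sum_k c_B(k)=m_2$. First I would merge-sort, for each $k\in[p]$, the non-$(-\infty)$ entries of column $k$ of $A$ with those of row $k$ of $B$, into a single sorted list $S_k$ of length at most $c_A(k)+c_B(k)$. Introduce a block-size parameter $s$ (to be tuned at the end) and partition each $S_k$ into consecutive blocks of $s$ entries. Every non-$(-\infty)$ entry then receives a \emph{block index} $\beta_A(i,k)$ or $\beta_B(k,j)$ in $\{1,\dots,\lceil(c_A(k)+c_B(k))/s\rceil\}$. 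The comparison $A[i,k]<B[k,j]$ decomposes into a cross-block case $\beta_A(i,k)<\beta_B(k,j)$ and a within-block case ($\beta_A(i,k)=\beta_B(k,j)$ plus a raw-value tie-break).

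For the within-block case I would enumerate directly: the work for block $\alpha$ of $S_k$, containing $a_{k,\alpha}$ entries from $A$ and $b_{k,\alpha}$ entries from $B$, is $a_{k,\alpha}b_{k,\alpha}\leq s\cdot \min(a_{k,\alpha},b_{k,\alpha})$; summing over $k$ and $\alpha$ yields $\bigo(s\cdot \min(m_1,m_2))$. The cross-block case is itself a dominance-product instance on $n\times p$ and $p\times n$ matrices with the same sparsities $m_1,m_2$ but with integer values in a range of size $T=\bigo(n/s)$, so I would recurse on it. The recursion terminates after $L=\bigo(\log_s n)$ levels, once the value range is $\bigo(1)$; at that point the dominance product is evaluated using $\bigo(1)$ rectangular Boolean matrix multiplications of the form $A_v\cdot B_w$ for $v<w$ (where $A_v[i,k]:=[A^*[i,k]=v]$), each computed in $\widetilde\bigo(\textrm{MM}(n,p,n))$ time, with outputs summed entry-wise.

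Summing the per-level within-block costs gives $\bigo(L\cdot s\cdot \min(m_1,m_2))$ in total, and the base case contributes $\widetilde\bigo(\textrm{MM}(n,p,n))$. Choosing $s:=\max(2,\,\max(m_1,m_2)/p)$ balances the within-block cost at $\widetilde\bigo(m_1m_2/p)$, producing the claimed bound $\widetilde\bigo(\textrm{MM}(n,p,n)+m_1m_2/p)$. The main obstacle will be the recursion bookkeeping: one must verify that every recursion level preserves the sparsities $m_1$ and $m_2$ (which holds since $-\infty$ entries propagate unchanged), that within-block raw-value comparisons can be answered in $\bigo(1)$ time per pair (by retaining pointers from block indices back to the original entries), and that the accumulated polylogarithmic factors from the $L$ recursion levels are absorbed into the $\widetilde\bigo(\cdot)$ notation.
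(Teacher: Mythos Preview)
Your argument is correct. The per-level within-block bound $\sum_{k,\alpha} a_{k,\alpha}b_{k,\alpha}\le s\cdot\min(m_1,m_2)$ holds since $\max(a_{k,\alpha},b_{k,\alpha})\le s$ and $\sum_\alpha\min(a_{k,\alpha},b_{k,\alpha})\le\min(c_A(k),c_B(k))$; the sparsities are indeed preserved level to level; and the $-\infty$ entries are carried untouched to the base case, where treating $-\infty$ as one additional constant-range value lets the indicator products $A_v\cdot B_w$ account for all $(-\infty,\text{finite})$ contributions. Your choice of $s$ balances correctly in both regimes: when $s=\max(m_1,m_2)/p\ge 2$ one gets $s\cdot\min(m_1,m_2)=m_1m_2/p$, and when $s=2$ the cost $2\min(m_1,m_2)\le 2np$ is absorbed by $\textrm{MM}(n,p,n)$.

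Your route, however, is quite different from the paper's. The paper does not touch Matou\v{s}ek's bucketing at all: it first reduces dominance product to $\bigo(\log n)$ Hamming-distance products (via the bitwise reduction of Vassilevska Williams--Williams, as presented in \cite{DBLP:journals/siamcomp/WilliamsW13,GLU:2018}), then rewrites each Hamming product as a sparse $n\times(np)$ by $(np)\times n$ Boolean product, and finally invokes Yuster--Zwick \cite{DBLP:journals/talg/YusterZ05} to split that product into a dense $\textrm{MM}(n,p,n)$ part (the $p$ heaviest columns/rows) and a brute-force sparse part of cost $m_1m_2/p$. So the paper's proof is a chain of three black-box reductions, whereas yours is a self-contained recursive adaptation of Matou\v{s}ek's algorithm to the sparse rectangular setting. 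Your approach is arguably closer in spirit to the Duan--Pettie argument the lemma cites, and it avoids importing the dominance-to-Hamming and Yuster--Zwick machinery; the paper's version, on the other hand, is shorter precisely because it outsources all the work to those references.
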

\begin{proof}
By reductions presented in \cite{DBLP:journals/siamcomp/WilliamsW13} (see \cite{GLU:2018} for alternative exposition), dominance product reduces to $\bigo(\log n)$ Hamming products with the reduction preserving dimensions and sparsity. 
We thus have to compute $\bigo(\log n)$ sparse Hamming products, with dimensions $n \times p$ and $p \times n$, and sparsity $m_1$ and $m_2$ respectively. 
By folklore reduction (see full version of \cite{GLU:2018} for exposition), each such product reduces to $n \times (np)$ vs $(np) \times n$ matrix product with sparsity $m_1$ and $m_2$ respectively. 
By techniques of \cite{DBLP:journals/talg/YusterZ05}, such product can be computed by decomposing into ``dense'' matrix product with cost $MM(n, p, n)$ (packing $p$ densest columns of first matrix, and $p$ densest rows of second matrix), and ``sparse'' matrix product with total cost $m_1 m_2 /p$.
\end{proof}
We provide following theorem for completeness (note, that we provide version with extra $\log$ factor, for the sake of shortening the proof).

\thmmaxminproduct*
\iffalse
\begin{theorem}[Corollary of \cite{DBLP:conf/soda/DuanP09}]
If $A$ and $B$ are respectively $n \times p$ and $p \times n$ matrices, then the $A \ovee B$ product can be computed in time $\widetilde\bigo(\sqrt{\textrm{MM}(n,p,n) \cdot n^2 p}\,)$.
\label{theorem:max-min-product-bound}
\end{theorem}
\fi
\begin{proof}
Let $L$ denote set of all the values in $A$ and $B$ of size $2np$ (w.l.o.g. all the values are distinct). We then partition $L$ into  $L_1, \ldots, L_t$, where each $L_r$ contains at most $\lceil 2np/t \rceil$ consecutive values from $L$.  We then construct sparse matrices $A_1, \ldots, A_t$ and $B_1, \ldots, B_t$ of dimensions $n \times p$ and $p \times n$, such that:
$$A_r[i,j] = \begin{cases}A[i,j] \text{ if } A[i,j] \in L_r\\ \infty \text{ otherwise} \end{cases}$$
$$B_r[i,j] = \begin{cases}B[i,j] \text{ if } B[i,j] \in L_r\\ -\infty \text{ otherwise} \end{cases}$$
For each $A_r$, a \emph{row-balancing} operation is applied (see Definition 2.1, \cite{DBLP:conf/soda/DuanP09}), producing $A'_r$ and $A''_r$, each of dimension $n \times p$ with $\bigo(p/t)$ elements in each row that are not $\infty$.

By construction from Theorem 3.3 in \cite{DBLP:conf/soda/DuanP09}, we need to compute, for each $r$: $A_r \olessthan B$, $A'_r \olessthan B$ and $A''_r \olessthan B$. Each such product reduces to: multiplication of Boolean matrices of dimension $n \times p$ with $p \times n$, and sparse dominance product of dimension $n \times p$ with $p \times n$ and density $m_1 = m_2 = \bigo(np/t)$. By Lemma~\ref{lem:sparsedominance}, this takes $\widetilde\bigo(MM(n,p,n) + n^2p/t^2)$ for each product. The postprocessing phase takes $\bigo(p/t)$ time for each $n^2$ elements of the output. The total time is then
$\widetilde\bigo(MM(n,p,n) t + n^2p/t)$, so setting $t = \sqrt{n^2 p / MM(n,p,n)}$ implies the runtime bound.
\end{proof}

\section{Faster decomposition in sparse graphs.}

We observe that our decomposition algorithm can be implemented more efficiently in sparse graphs (more precisely, whenever $m/\ell \ll n$). This is not critical in our application, since the number of edges will be $\Theta(n^2)$ in our case. However, since this might be helpful in other applications, we give the details in the following.
\begin{theorem}\label{thr:spaseDecomposition}
Let $G=(V,E)$ be a DAG with $n$ vertices and $m$ edges, represented via adjacency lists, and let $\ell\in [1,n]$ be an integer parameter. Then there exists an $\bigo(\frac{mn}{\ell})$ time deterministic algorithm to compute an $(\ell,\frac{2n}{\ell})$ decomposition of $G$.
\end{theorem}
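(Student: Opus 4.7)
The plan is to re-run Algorithm~\ref{alg:decomposition} verbatim at the logical level---preserving the data structures $L'_i$, $L_{\nextrm}$, $L_{\prevrm}$ and Invariant~\ref{inv:update}---while replacing the ``scan an entire layer'' steps inside $\procinsert()$ and $\procmove()$ by sweeps over the adjacency lists of the vertex involved. Since the sequence of inserts, moves, and deletes is unchanged, the correctness arguments of Lemma~\ref{lem:correctnessDecomposition} and Lemma~\ref{lem:invariant} carry over without modification; the new content is entirely in the running-time analysis.

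Concretely, I would first build forward and reverse adjacency lists (in $O(m+n)$ time), a topological order (in $O(m+n)$ time), a Boolean membership vector for $G'$, and store $h(u)$ for each $u\in G'$ so that both membership in $G'$ and the current layer of any vertex are answerable in $O(1)$. Inside $\procinsert(v)$, instead of iterating $j=h-1,\ldots,1$ and sweeping each $L'_j$, I iterate once through the in-neighbor list of $v$: I discard in-neighbors not in $G'$, compute the maximum level $j^\star=\max_u h(u)$ over the remaining ones, place $v$ into $L'_{j^\star+1}$, and initialize $L_{\nextrm}(v)$ with those parents attaining level $j^\star$, updating the corresponding $L_{\prevrm}$ lists. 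The list $L_{\prevrm}(v)$ is empty by the topological ordering. Procedure $\procmove(v)$ is modified analogously: one pass over the in-neighbors of $v$ selects the new level and initializes the new $L_{\nextrm}(v)$, and one pass over the out-neighbors of $v$ identifies the children at layer $h(v)+1$, rebuilding $L_{\prevrm}(v)$ and the corresponding $L_{\nextrm}$ entries. Procedure $\procdelete()$ is left unchanged, since it already works through $L_{\nextrm}(v)$ and $L_{\prevrm}(v)$ only.

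The main obstacle is amortizing $\procmove()$, because the same vertex may be moved many times. A call to $\procinsert(v)$ or $\procdelete(v)$ now costs $O(\deg(v))$ (the removal/insertion touches only $L_{\nextrm}(v)$, $L_{\prevrm}(v)$, and the adjacency lists of $v$), contributing $O(m)$ in total. A single call $\procmove(v)$ also costs $O(\deg(v))$: indeed $|L_{\nextrm}(v)|\le \deg_{\mathrm{in}}(v)$ and $|L_{\prevrm}(v)|\le \deg_{\mathrm{out}}(v)$, and all searches and list reconstructions reduce to scanning the in- and out-neighbor lists of $v$. The key observation is that every $\procmove(v)$ strictly decreases $h(v)$, so $v$ is moved at most $h-1=O(n/\ell)$ times during the whole execution. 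Summing over all vertices, the total cost of $\procmove()$ operations is bounded by $O\bigl((n/\ell)\sum_v \deg(v)\bigr)=O(nm/\ell)$.

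Finally, the explicit chain extraction in case (a) costs $O(h)$ per chain by following $L_{\nextrm}$ pointers, hence $O(h\ell)=O(n)$ across the at most $\ell$ chains produced; the antichain extractions are dominated by the deletion cost already accounted for. Putting everything together gives a total running time of $O(m+n+nm/\ell)=O(nm/\ell)$, which yields the claimed bound.
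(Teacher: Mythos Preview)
Your proposal is correct and matches the paper's own argument essentially step for step: replace the per-layer scans in $\procinsert()$ and $\procmove()$ by sweeps over the in- (and out-) adjacency lists of the vertex being processed, keep $\procdelete()$ as is (its cost is already bounded by $|L_{\nextrm}(v)|+|L_{\prevrm}(v)|\le \deg(v)$), and amortize $\procmove()$ via the observation that $h(v)$ strictly decreases at each call, giving at most $h-1=O(n/\ell)$ moves per vertex and hence $O(\sum_v \deg(v)\cdot n/\ell)=O(mn/\ell)$ total.
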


\begin{proof}
We modify the above algorithm as follows. We do not compute the adjacency matrix of $G$, and we compute the topological order of $G$ in time $\bigo(m+n)$.  In each $\procinsert(v)$ operation, we simply scan the in-neighbors of $v$ and check in which layer they are to identify the layer where $v$ has to be inserted. We similarly modify the involved lists $L_{\nextrm}()$ and $L_{\prevrm}()$. Hence we can perform this operation in time $\bigo(\degree(v))$, where $\degree(v)$ is the degree of $v$ in $G$. Similarly, for each $\procdelete(v)$ operation, $v\in L'_i$, we consider the out-neighbors of $G$ and check which ones belong to $L'_{i+1}$. Hence also this operation can be performed in time $\bigo(\degree(v))$. Thus $\procinsert()$ and $\procdelete()$ operations cost $\bigo(m)$ in total. In each $\procmove(v)$ operation we consider all parents of $v$ and identify the lowest layer of any such parent. Hence this operation can be implemented in $\bigo(\degree(v))$ time. By construction each time we execute $\procmove(v)$, $v$ is moved to a strictly lower level. Since the largest level of a vertex $v$ on which we execute $\procmove(v)$ is $h-1$, we can perform this operation at most $h-2$ times. So the total cost of $\procmove()$ operations is $\bigo(\sum_{v\in V}\degree(v)\frac{n}{\ell})=\bigo(\frac{mn}{\ell})$. The claim follows. 
\end{proof}

\section{Faster LCA for smaller set of queries.}

We now show that if one is interested in computing the LCA for all pairs of vertices from a subset $S \subset V$ of the vertices, where $|S| = \bigo(n^\delta), \delta \le 1$, we can modify the algorithm from Section  \ref{sec:mainresult} to run faster. 
We refer to this problem as the \emph{$S$-pairs LCA} problem.

On a high-level, the algorithm remains the same. Let $G_{input}$ be the input DAG. We first compute in $\bigo(n^\omega)$ the transitive closure of $G$, and solve the $S$-pairs LCA problem on $G$.
Then, we compute in $\bigo(n^2)$ time an $(n^x,2n^{1-x})$-decomposition $(\cP,\cQ)$ of $G$ with the algorithm from Theorem \ref{thr:mainDecomposition}.
Again, the parameter $x$ will be fixed later on to optimize the running time of the algorithm.

\begin{algorithm}[t]
%\SetAlgoLined
\textbf{Input:} {Transitive closure graph $G=(V,E)$, a subset $S \subset V$ of vertices, 
and a family of chains $\mathcal{P}=\{P_1, \dots, P_p\}$ of $G$ where $p\leq n^{x}$.}\\
\textbf{Output:} {$\mathcal{P}$-restricted LCA $\lca_{\mathcal{P}}(u,v)$ for each pair of vertices $u,v\in S$.} 
    
    \begin{algorithmic}[1]
    \STATE Initialize $\lca_{\cP}(\cdot,\cdot)$ with $-\infty$\;
    
    \STATE Let $A$ be a $|S| \times p$ matrix\;
    
    \FOR{$P_i \in \mathcal{P}$}
        \FOR{$v\in S$}
            \STATE Let $w_i(v)$ be the parent of $v$ in $P_i$ with the largest index, otherwise $w_i(v)=-\infty$\;
            
            \STATE Set $A[v, i] \leftarrow w_i(v)$\;
        \ENDFOR
    \ENDFOR
    
    \STATE Compute the (max,min)-product $A \ovee A^T$\;

    \FOR{all $u,v \in S, u\not=v$}
            \STATE $\lca_{\mathcal{P}}(u,v) \leftarrow (A \ovee A^T)[u,v]$\;
    \ENDFOR

\end{algorithmic}
 \caption{Compute $\lca_{\mathcal{P}}(u,v)$ for all pairs of vertices $u,v\in S$.}
 \label{alg:chain-restricted-lca-small-set}
\end{algorithm}
Incorporating the $W$-restricted LCAs to the context of the $S$-pairs LCA problem, we give the following definition.

\begin{definition}
Given a DAG $G=(V,E)$, and two subsets of vertices $W,S\subseteq V$, the $W$-restricted $S$-Pairs LCA problem is to compute $\lca_W(u,v)\in LCA_W(u,v)$ for all pairs of vertices $u,v\in S$ ($\lca_W(u,v)=-\infty$ if $LCA_W(u,v)=\emptyset$).   
\end{definition}

Similarly to Section \ref{sec:mainresult}, we compute a solution to the $\cP$-restricted and $\cQ$-restricted $S$-Pairs LCA problems (that is, the values $\lca_{\cP}(u,v)$ and $\lca_{\cQ}(u,v)$ for all pairs of vertices $u,v\in S$), and later on combine these solutions to compute a solution to the $S$-pairs LCA problem. Again, we set $\lca(u,v)= \max \{\lca_{\cP}(u,v),\lca_{\cQ}(u,v)\}$, where the labels of the vertices respect a $\cQ$-compact topological order, where $\cQ$ is a path-respecting family of antichains of $G$.
Throughout the section, we assume the vertices are are labeled with integers $1, \dots, n$ according to a $\cQ$-compact topological order.

While the modifications to the algorithm from Section \ref{sec:mainresult} are straightforward, and the proof of correctness is essentially the same, we still spell-out the details for completeness.

The first modifications are in the algorithms that compute the solutions to the $\cP$-restricted and $\cQ$-restricted $S$-Pairs LCA problems. 
The modified version of Algorithm \ref{alg:chain-restricted-lca} is presented in Algorithm \ref{alg:chain-restricted-lca-small-set} and its proof in Lemma \ref{lemma:chain-restricted-lca-small-set}, while the modified version of Algorithm \ref{alg:antichain-restricted-lca} is presented in Algorithm \ref{alg:antichain-restricted-lca-small-set} and its proof in Lemma~\ref{lemma:antichain-restricted-lca-small-set}.

\begin{lemma}
Algorithm \ref{alg:chain-restricted-lca-small-set} computes for each pair of vertices $u,v \in S$, $|S|=n^{\delta}$, a $\mathcal{P}$-restricted LCA $\lca_{\mathcal{P}}(u,v)$. The algorithm runs in $\widetilde{\bigo}(n^{\frac{\omega(\delta,x,\delta) + 2\delta + x}{2}})$.
\label{lemma:chain-restricted-lca-small-set}
\end{lemma}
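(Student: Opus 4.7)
My plan is to adapt the proof of Lemma \ref{lemma:chain-restricted-lca} to the smaller query set, treating $S$ throughout in place of $V$ wherever pairs of query vertices appear, while noting that the chain set $\mathcal{P}$ is unchanged (still indexed by all of $V$).

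For correctness I would argue essentially verbatim as in Lemma \ref{lemma:chain-restricted-lca}. For any pair $u,v \in S$ and any chain $P_i$, the value $w_i(u,v) = \min\{w_i(u), w_i(v)\}$ is a vertex of $P_i$ that is an ancestor of both $u$ and $v$, and it is the \emph{lowest} such vertex on $P_i$: any successor of $w_i(u,v)$ along $P_i$ fails to be an ancestor of either $u$ or $v$, and any predecessor is dominated by $w_i(u,v)$. Thus the set $W := \{w_i(u,v)\}_{i=1}^{p}$ contains every element of $LCA_{\cP}(u,v)$ (with $W=\{-\infty\}$ iff $LCA_{\cP}(u,v)=\emptyset$). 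The max-min product $A \ovee A^T$ evaluated at $(u,v)$ returns the largest-index element of $W$; by exactly the contradiction argument of Lemma \ref{lemma:chain-restricted-lca} (using that the vertex indices form a topological order, so a proper descendant has a strictly larger index), this element must lie in $LCA_{\cP}(u,v)$.

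For the running time the only substantive change is the use of rectangular dimensions. Building $A$ requires scanning, for each $v \in S$ and each $P_i \in \cP$, enough of the transitive-closure neighborhood of $v$ to determine the largest-indexed ancestor in $P_i$; using $G$'s adjacency matrix this is $\bigo(|S|\cdot n) = \bigo(n^{1+\delta})$ total, which is absorbed. The dominant cost is the single max-min product of the $n^\delta \times n^x$ matrix $A$ with its $n^x \times n^\delta$ transpose. Applying Theorem \ref{theorem:max-min-product-bound} with the roles of $n$ and $p$ instantiated as $n^\delta$ and $n^x$ respectively gives
\[
\widetilde{\bigo}\!\left(\sqrt{\textrm{MM}(n^\delta, n^x, n^\delta)\cdot (n^{\delta})^2 \cdot n^x}\,\right) \;=\; \widetilde{\bigo}\!\left(n^{\frac{\omega(\delta,x,\delta)+2\delta+x}{2}}\right),
\]
which matches the claimed bound. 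Since $\delta \le 1$, this is at most the bound of Lemma \ref{lemma:chain-restricted-lca}.

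The only step that requires any care is verifying that Theorem \ref{theorem:max-min-product-bound} remains valid in the rectangular regime we need; however, inspection of its proof (via Lemma \ref{lem:sparsedominance} and the row-balancing reduction) shows that all three dimensions appear symmetrically only through the underlying rectangular matrix-multiplication cost $\textrm{MM}(\cdot,\cdot,\cdot)$, so the same derivation goes through with $n \mapsto n^\delta$, $p \mapsto n^x$. Hence no obstacle arises beyond bookkeeping, and the lemma follows.
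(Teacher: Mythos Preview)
Your proposal is correct and follows essentially the same approach as the paper: correctness is inherited from Lemma~\ref{lemma:chain-restricted-lca}, and the running time comes from a single application of Theorem~\ref{theorem:max-min-product-bound} to the $n^\delta \times n^x$ matrix $A$. Your argument is in fact slightly more careful than the paper's (you give the tighter $\bigo(n^{1+\delta})$ bound for building $A$ rather than the paper's $\bigo(n^2)$, and you explicitly note that Theorem~\ref{theorem:max-min-product-bound} applies in this rectangular regime), but these are elaborations of the same proof, not a different route.
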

\begin{proof}
The proof of correctness follows from Lemma \ref{lemma:chain-restricted-lca}.
We now show the proof for the running time.
For each vertex $v$ this can be done in $\bigo(n)$ time for all $P_i$ by scanning all incoming edges from $v$, and in $\bigo(n^2)$ for all vertices in $S$ and all $P_i, 1 \leq i \leq p$.

The $(\max,\min)$ matrix multiplication $A \ovee A^T$, where $A$ is an $n^{\delta}\times n^{x}$, can be performed in time $\widetilde{\bigo}(n^{\frac{\omega(\delta,x,\delta) + 2\delta + x}{2}})$, by Theorem \ref{theorem:max-min-product-bound}. The time to compute the $(\max,\min)$ matrix product dominates the running time of Algorithm \ref{alg:chain-restricted-lca}.
\end{proof}

\begin{algorithm}[t]
%\SetAlgoLined
\textbf{Input:} {Transitive closure graph $G=(V,E)$, a subset $S \subset V$ of the vertices, and a family of antichains $\mathcal{Q}=\{Q_1, \dots, Q_q\}$ of $G$ that is path-respecting such that $q\leq 2n^{1-x}$.}\\
\textbf{Output:} {$\mathcal{Q}$-restricted LCA $\lca_{\mathcal{Q}}(u,v)$ for each pair of vertices $u,v\in V$.}

\begin{algorithmic}[1]
	
\STATE Initialize $\lca_{\cQ}(\cdot,\cdot)$ with $-\infty$\;

% Let $\phi_{V}:V\xrightarrow{1:1} \{1,\dots, |V|\}$ be an arbitrary bijection and $\phi^{-1}_{V}(\cdot)$ be its reverse function.
  \FOR{$i=q,\ldots,1$}
    \STATE Initialize a $|S| \times n^x$ matrix $A$ with zeros\;
    
 \STATE Let $\phi_{i}:Q_{i}\xrightarrow{1:1} \{1,\dots, |Q_{i}|\}$ be an arbitrary bijection and $\phi^{-1}_{i}(\cdot)$ be its inverse function\;
    
    \FOR{all $x\in |S|,y\in Q_{i}$ such that $(x,y)\in E$}
       \STATE  $A[x, \phi_{i}(y)] \leftarrow 1$\;
    \ENDFOR
    
    \STATE Compute $A\cdot A^T$, and its witness matrix $W$\;
    
    \FOR{all $u,v \in S, u\not=v$}
        \IF{$\lca_{\mathcal{Q}}(u,v) = -\infty$ and $A\cdot A^T[u,v] \not=0$}
            \STATE $\lca_{\mathcal{Q}}(u,v) \leftarrow \phi^{-1}_{i}(W[u,v])$\;
        \ENDIF
    \ENDFOR
    
 \ENDFOR
\end{algorithmic}

 \caption{Compute $\lca_{\mathcal{Q}}(u,v)$ for all pairs of vertices $u,v\in V$.}
 \label{alg:antichain-restricted-lca-small-set}
\end{algorithm}

\begin{lemma}
Algorithm \ref{alg:antichain-restricted-lca-small-set} computes the $\mathcal{Q}$-restricted LCA $lca_{\mathcal{Q}}(u,v)\in Q_i$ for each pair of vertices $u,v\in S \subseteq V, |S| = n^\delta$. The algorithm runs in time $\widetilde{\bigo}(n^{1-x+\omega(\delta,x,\delta)})$.
\label{lemma:antichain-restricted-lca-small-set}
\end{lemma}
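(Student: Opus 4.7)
The plan is to mirror the proof of Lemma \ref{lemma:antichain-restricted-lca} with the only change being that the dimensions of the matrix products shrink from $n \times n^x$ to $|S| \times n^x$. Correctness is essentially inherited: Algorithm~\ref{alg:antichain-restricted-lca-small-set} differs from Algorithm~\ref{alg:antichain-restricted-lca} only in that matrix $A$ is indexed by $S$ instead of $V$ and only pairs $u,v \in S$ are updated. I would therefore argue by contradiction exactly as before: if the computed $\lca_\cQ(u,v) = w \in Q_i$ were incorrect for some $u,v \in S$, then some common ancestor $w' \in Q_j$ would be a strict descendant of $w$, forcing $j \neq i$ (as $Q_i$ is an antichain), and $j < i$ (since otherwise $w'$ would have been selected before $w$ when antichains are scanned in the order $Q_q, \ldots, Q_1$). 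The edge $(w,w') \in E$ with $w \in Q_i$, $w' \in Q_j$, and $j < i$ then contradicts the path-respecting property of $\cQ$.

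For the running time, the dominant cost is a sum of rectangular Boolean products with witnesses, one per antichain. Using Theorem~\ref{th:rectwitness} (which extends to rectangular output dimensions in the standard way), computing $A \cdot A^T$ together with its witness matrix costs $\widetilde{\bigo}(\textrm{MM}(|S|, |Q_i|, |S|))$, while building $A$ and reading off the answers fits inside the same budget. I would then bound the total
\[
\sum_{i=1}^q \textrm{MM}(|S|, |Q_i|, |S|)
\]
by the same telescoping / geometric-grouping argument as in Lemma~\ref{lemma:antichain-restricted-lca}: w.l.o.g. the $|Q_i|$ are non-increasing, so $|Q_i| \le n/i$, and grouping indices in dyadic blocks of size $2^j$ one gets a $(1+\log q)$ factor times $q \cdot \textrm{MM}(|S|, n/q, |S|)$.

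Plugging in $|S| = n^\delta$ and the decomposition bound $q \leq 2n^{1-x}$, so that $n/q \geq n^x/2$, yields
\[
q \cdot \textrm{MM}(|S|, n/q, |S|) \;=\; \widetilde{\bigo}\bigl(n^{1-x}\cdot n^{\omega(\delta, x, \delta)}\bigr) \;=\; \widetilde{\bigo}(n^{1-x + \omega(\delta,x,\delta)}),
\]
which is the claimed bound. The only subtle point, and what I would flag as the main thing to get right, is justifying that Theorem~\ref{th:rectwitness} still delivers witnesses within $\widetilde{\bigo}(\textrm{MM}(|S|, |Q_i|, |S|))$ for rectangular output shapes (a straightforward adaptation of the Alon--Galil--Margalit--Naor construction, which treats the output dimensions symmetrically in the sampling and recursion); everything else is a direct specialization of the $|S| = n$ analysis already carried out for Lemma~\ref{lemma:antichain-restricted-lca}.
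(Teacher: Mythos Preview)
Your proposal is correct and follows essentially the same approach as the paper: correctness is inherited verbatim from Lemma~\ref{lemma:antichain-restricted-lca}, and the running-time bound is obtained by the identical dyadic-grouping argument on $\sum_{i=1}^q \textrm{MM}(n^\delta,|Q_i|,n^\delta)$ with $|Q_i|\le n/i$, yielding $\widetilde\bigo(n^{1-x+\omega(\delta,x,\delta)})$. The only extra point you raise---that Theorem~\ref{th:rectwitness} still gives witnesses in $\widetilde\bigo(\textrm{MM}(|S|,|Q_i|,|S|))$ for rectangular output---is taken for granted in the paper (it treats the result as folklore), but your remark is accurate and does not affect the argument.
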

\begin{proof}
The proof of correctness follows from Lemma \ref{lemma:antichain-restricted-lca-small-set}.
Initializing the $n^{\delta}\times n^{x}$ dimensional matrix $A$ for all $q=2n^{1-x}$ iterations take $\bigo(n^{\delta}n) \in \bigo(n^2)$ time.
We apply $n^{1-x}$ rectangular Boolean matrix multiplications $A\cdot A^T$ and witnesses.
The running time of the algorithm is upper bounded by $\widetilde\bigo(\sum_{i=1}^q \textrm{MM}(n^\delta,|Q_i|,n^\delta))$. 
Assume w.l.o.g. that $|Q_i|$ is non-increasing, then $|Q_i| \le n/i$, and by monotonicity of  $\textrm{MM}(n^\delta,\cdot,n^\delta)$
\begin{align*}
\sum_{i=1}^q \textrm{MM}(n^\delta,|Q_i|,n^\delta) &\le \sum_{i=1}^q \textrm{MM}(n^\delta,n/i,n^\delta)\\
&\le \sum_{j=0}^{\log q} 2^j \textrm{MM}(n^\delta,n/2^j,n^\delta)\\
&\le (1+\log q) \cdot q \cdot \textrm{MM}(n^\delta,n/q,n^\delta)\\
&\in \widetilde\bigo(n^{1-x+\omega(\delta,x,\delta)}).
\end{align*}
\end{proof}

\begin{algorithm}[t]
%\SetAlgoLined
\textbf{Input:} {DAG $G_{input}=(V,E_{input})$}\\
\textbf{Output:} {$\lca(u,v)$ for each pair of vertices $u,v\in S$}

\begin{algorithmic}[1]
    \STATE Compute the transitive closure graph $G=(V,E)$ of $G_{input}$\;

    \STATE Use Algorithm \ref{alg:decomposition}
    to compute a $(n^x, 2n^{1-x})$-decomposition into a family of chains $\mathcal{P}=\{P_1, \dots, P_p\}$ with $p \leq n^x$ and $|P_i| \leq n^{1-x}$ and a family of antichains $\mathcal{Q'} = \{Q'_1, \dots, Q'_{q'}\}$ with $q' \leq 2n^{1-x}$\;
    
    \STATE Use Lemma \ref{lemma:total-order-antichains} with input $\mathcal{Q'}$ to compute a path-respecting family of antichains $\mathcal{Q}=\{Q_1,\dots, Q_{q}\}$ of $G$ where $q \leq 2n^{1-x}$\;

    \STATE Compute a $\mathcal{Q}$-compact topological order of $G$ using Lemma \ref{lemma:q-compact-topological-order-computation} and rename vertices so that they are $1,\ldots,n$ according to this order\;

    \STATE Use Algorithm \ref{alg:chain-restricted-lca} to compute $\mathcal{P}$-restricted LCA $\lca_{\mathcal{P}}(u,v)$ for each pair of vertices $u,v\in S$\;

    \STATE Use Algorithm \ref{alg:antichain-restricted-lca} to compute $\mathcal{Q}$-restricted LCA $\lca_{\mathcal{Q}}(u,v)$ for each pair of vertices $u,v\in S$\;
    \FOR{\textbf{all} $u,v \in S, u\not=v$}
        \STATE $\lca(u,v) \leftarrow \max\{\lca_{\mathcal{Q}}(u,v), \lca_{\mathcal{P}}(u,v)\}$\;
    \ENDFOR   
    
\end{algorithmic}
 \caption{Compute $\lca(u,v)$ for all pairs of vertices $u,v\in S$}
 \label{alg:All-Pairs-global-lca-small-set}
\end{algorithm}

\begin{theorem}
Algorithm \ref{alg:All-Pairs-global-lca-small-set} computes for all pairs of vertices $u,v\in S$ a LCA $\lca(u,v)$. If $|S| = n^\delta$, then the algorithm runs in time $\bigo(n^\omega + n^{1-x+\omega(\delta,x,\delta)} + n^{\frac{\omega(\delta,x,\delta) + 2\delta + x}{2}})$.
\end{theorem}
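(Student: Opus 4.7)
My plan is to mirror the proof of Theorem \ref{th:runtimebyx} almost verbatim, splitting into a running-time analysis and a correctness argument, and flagging only the places where the cardinality $|S|=n^\delta$ alters the bookkeeping.

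For the running time, I would simply add up the costs of the six steps in Algorithm \ref{alg:All-Pairs-global-lca-small-set}. Computing the transitive closure costs $\bigo(n^\omega)$; invoking Theorem \ref{thr:mainDecomposition} with $\ell=n^x$ gives the $(n^x,2n^{1-x})$-decomposition in $\bigo(n^2)$; Lemma \ref{lemma:total-order-antichains} and Lemma \ref{lemma:q-compact-topological-order-computation} each run in $\bigo(n^2)$; the $\cP$-restricted $S$-pairs computation costs $\widetilde\bigo(n^{(\omega(\delta,x,\delta)+2\delta+x)/2})$ by Lemma \ref{lemma:chain-restricted-lca-small-set}; the $\cQ$-restricted $S$-pairs computation costs $\widetilde\bigo(n^{1-x+\omega(\delta,x,\delta)})$ by Lemma \ref{lemma:antichain-restricted-lca-small-set}; and the final combining step runs in $\bigo(|S|^2)\subseteq\bigo(n^{2\delta})\subseteq \bigo(n^2)$. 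Using $\omega\le\omega(\delta,x,\delta)+1-x$ to absorb the $\bigo(n^2)$ contributions into one of the three dominant terms yields the stated bound $\bigo(n^\omega + n^{1-x+\omega(\delta,x,\delta)} + n^{(\omega(\delta,x,\delta)+2\delta+x)/2})$.

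For the correctness, the argument is structurally identical to that of Theorem \ref{th:runtimebyx}, since restricting queries to pairs in $S$ does not affect any of the structural properties used in that proof: the chain/antichain decomposition, the path-respecting property of $\mathcal{Q}$, and the $\mathcal{Q}$-compact topological ordering all still hold globally on $G$. I would assume by contradiction that for some $u,v\in S$ the returned $w=\lca(u,v)$ has a proper descendant $w'$ that is also a common ancestor of $u$ and $v$. Splitting into the two cases $w=\lca_\cP(u,v)$ and $w=\lca_\cQ(u,v)$: in the first case, the correctness parts of Lemmas \ref{lemma:chain-restricted-lca-small-set} and \ref{lemma:antichain-restricted-lca-small-set} force $w'$ to lie in some antichain $Q_i$, and $\cQ$-compactness together with the fact that $Q_i$ contains a common ancestor of $u,v$ forces $\lca_\cQ(u,v)$ to land in some $Q_j$ with $j\ge i$ and hence to receive a label strictly larger than $w$, contradicting $w=\max\{\lca_\cP,\lca_\cQ\}$. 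The symmetric case, where $w=\lca_\cQ(u,v)$, forces $w'\in\cup_i P_i$ with $w'>w$ and $\lca_\cP(u,v)\ge w'$, again a contradiction.

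The only step that requires any thought is the correctness argument, and even there the main obstacle is not new: it is the subtle use of $\cQ$-compactness to guarantee that the ``flexibility'' in which element of $Q_i$ is returned by the antichain algorithm cannot hurt us. Since this argument is imported wholesale from the proof of Theorem \ref{th:runtimebyx} and makes no reference to the query set $S$, restricting to $S$ is essentially cosmetic; all the new content lies in the modified running-time terms coming from the rectangular shapes $n^\delta\times n^x$ handled in Lemmas \ref{lemma:chain-restricted-lca-small-set} and \ref{lemma:antichain-restricted-lca-small-set}.
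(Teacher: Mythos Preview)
Your proposal is correct and follows exactly the paper's approach: the paper's own proof simply says ``the proof for correctness follows from Theorem \ref{th:runtimebyx}'' and declares the running time ``trivially'' equal to the sum of the three displayed terms, so you are in fact supplying more detail than the paper does. One small quibble: the inequality $\omega\le\omega(\delta,x,\delta)+1-x$ you invoke is not valid for general $\delta<1$, but you do not need it---the $\bigo(n^2)$ contributions are absorbed directly by the $n^\omega$ term since $\omega\ge 2$.
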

\begin{proof}
Let $|S| = \bigo(n^\delta)$.
The proof for correctness follows from Theorem \ref{th:runtimebyx}. 
The running time of the algorithm is trivially $\widetilde\bigo(n^{\omega} + n^{\frac{\omega(\delta,x,\delta)+2\delta+x}{2}}+n^{1-x + \omega(\delta,x,\delta)})$ for a fixed $x\in [0,1]$. 
\end{proof}

What is now left is to find optimal value of $x$ as a function of $\delta$. Balancing the cost terms, we need to have $1-x+\omega(\delta,x,\delta) = \frac{\omega(\delta,x,\delta) + 2\delta + x}{2}$ which is equivalent to $\frac{2-2\delta}{\delta} +\omega(1,\frac{x}{\delta},1) = 3 \frac{x}{\delta}$. (Here we are using $\omega(at,bt,ct) = t \cdot \omega(a,b,c)$ property which holds for any $a,b,c,t \ge 0$.)
Using square matrix multiplication as a subroutine to implement rectangular matrix multiplication (i.e., the bound in \eqref{eq:boundsquare}), one obtains $x = \frac{2}{5-\omega}$ and $\gamma'=2\delta+\frac{4}{5-\omega}-1 \le 2 \delta + 0.522571$.
As usual, one can do better using more refined rectangular matrix multiplication algorithms. In particular, using the bound in \eqref{eq:boundrect}, one gets $x = \frac{2-\beta \alpha \delta}{3 - \beta}$, $\gamma' = \frac{2(2-\beta \alpha \delta)}{3 - \beta} + 2\delta-1 \le 0.6282973594 + 1.86112061279 \delta$.
If instead we apply the bound \eqref{eq:legallurrutia}, we get $\gamma' \le 0.711508 + 1.73504 \delta$ for $x = 0.855754 - 0.132478 \delta $. 

The running time becomes $\widetilde\bigo(n^\omega)$ for $|S| = n^\delta$ and: $\delta \le \frac{\omega+1}{2}-\frac{2}{5-\omega} < 0.925146$ if bound \eqref{eq:boundsquare} is used, $\delta \le  \frac{(\omega+1)(3-\beta) - 4}{2(3-\beta)- 2\beta \alpha} < 0.937374$ using the bound in \eqref{eq:boundrect} and $\delta \le 0.957531$ if we apply the bound \eqref{eq:legallurrutia} (given current bounds on $\omega(1,\cdot,1)$).

\section{Conclusions and Open Problems}

To the best of our knowledge, All-Pairs LCA is the first example of a natural graph problem with an algorithm based on fast matrix multiplication, which has a running time strictly between $\Omega(n^2)$ and $\bigo(n^{2.5})$, under the assumption $\omega=2$. 
This might suggest that a faster algorithm exists (e.g., with a running time of $\widetilde\bigo(n^{\omega})$). Alternatively, it would be interesting to derive fine-grained lower bounds based on All-Pairs LCAs in DAGs.

A simple greedy algorithm for decomposing a DAG into $\bigo(\sqrt{n}\,)$ chains and antichains runs in time $\bigo(n^{2.5})$ for dense graphs. Our algorithm improves this bound to $\bigo(n^2)$. In the similar problem of decomposing a sequence into  $\bigo(\sqrt{n}\,)$ monotonic subsequences, a naive greedy algorithm works in time $\bigo(n^{1.5} \log(n))$. Yehuda and Fogel improved this to $\bigo(n^{1.5})$ \cite{DBLP:journals/acta/Bar-YehudaF98} and there has been no further progress ever since.
It was also noted by J{\o}rgensen and Pettie that this is a natural example of a problem with a large ($\widetilde{\Omega} (\sqrt{n}\,)$) gap between the current algorithmic and decision-tree complexity \cite{DBLP:conf/focs/JorgensenP14}. 
Therefore, it would be interesting to see if the techniques  developed in this paper can be used to improve the time complexity of sequence decomposition. Alternatively, one could further investigate the relationship between the two problems in order to prove some % kind of 
lower bounds.

\subsection*{Acknowledgments}
The authors would like to thank Adam Polak for insightful discussions.

\bibliography{bib}

\begin{thebibliography}{10}

\bibitem{DBLP:conf/stoc/AhoHU73}
A.~V. Aho, J.~E. Hopcroft, and J.~D. Ullman.
\newblock On finding lowest common ancestors in trees.
\newblock In {\em {STOC} 1973}, pages 253--265.

\bibitem{DBLP:journals/corr/abs-2010-05846}
J.~Alman and V.~V. Williams.
\newblock A refined laser method and faster matrix multiplication.
\newblock In {\em {SODA} 2021 (to appear)}.

\bibitem{DBLP:journals/jcss/AlonGM97}
N.~Alon, Z.~Galil, and O.~Margalit.
\newblock On the exponent of the all pairs shortest path problem.
\newblock {\em J. Comput. Syst. Sci.}, 54(2):255--262, 1997.

\bibitem{DBLP:conf/focs/AlonGMN92}
N.~Alon, Z.~Galil, O.~Margalit, and M.~Naor.
\newblock Witnesses for boolean matrix multiplication and for shortest paths.
\newblock In {\em {FOCS} 1992}, pages 417--426.

\bibitem{DBLP:conf/isaac/ArroyueloCDDHLMNSS09}
D.~Arroyuelo, F.~Claude, R.~Dorrigiv, S.~Durocher, M.~He,
  A.~L{\'{o}}pez{-}Ortiz, J.~I. Munro, P.~K. Nicholson, A.~Salinger, and
  M.~Skala.
\newblock Untangled monotonic chains and adaptive range search.
\newblock In {\em ISAAC}, volume 5878 of {\em Lecture Notes in Computer
  Science}, pages 203--212. Springer, 2009.

\bibitem{DBLP:journals/acta/Bar-YehudaF98}
R.~Bar{-}Yehuda and S.~Fogel.
\newblock Partitioning a sequence into few monotone subsequences.
\newblock {\em Acta Informatica}, 35(5):421--440, 1998.

\bibitem{DBLP:journals/corr/abs-1911-06132}
H.~Barr, T.~Kopelowitz, E.~Porat, and L.~Roditty.
\newblock $\{-1, 0, 1\}$-{APSP} and (min, max)-product problems.
\newblock {\em CoRR}, abs/1911.06132, 2019.

\bibitem{DBLP:conf/latin/BenderF00}
M.~A. Bender and M.~Farach{-}Colton.
\newblock The {LCA} problem revisited.
\newblock In {\em {LATIN} 2000}, volume 1776 of {\em Lecture Notes in Computer
  Science}, pages 88--94.

\bibitem{DBLP:journals/jal/BenderFPSS05}
M.~A. Bender, M.~Farach{-}Colton, G.~Pemmasani, S.~Skiena, and P.~Sumazin.
\newblock Lowest common ancestors in trees and directed acyclic graphs.
\newblock {\em J. Algorithms}, 57(2):75--94, 2005.

\bibitem{DBLP:conf/soda/BenderPSS01}
M.~A. Bender, G.~Pemmasani, S.~Skiena, and P.~Sumazin.
\newblock Finding least common ancestors in directed acyclic graphs.
\newblock In {\em {SODA} 2001}, pages 845--854.

\bibitem{DBLP:journals/eik/BrandstadtK86}
A.~Brandst{\"{a}}dt and D.~Kratsch.
\newblock On partitions of permutations into increasing and decreasing
  subsequences.
\newblock {\em J. Inf. Process. Cybern.}, 22(5/6):263--273, 1986.

\bibitem{DBLP:conf/stoc/BringmannKW19}
K.~Bringmann, M.~K{\"{u}}nnemann, and K.~Węgrzycki.
\newblock Approximating {APSP} without scaling: equivalence of approximate
  min-plus and exact min-max.
\newblock In {\em {STOC} 2019}, pages 943--954.

\bibitem{DBLP:conf/spire/ClaudeMN10}
F.~Claude, J.~I. Munro, and P.~K. Nicholson.
\newblock Range queries over untangled chains.
\newblock In {\em SPIRE}, volume 6393 of {\em Lecture Notes in Computer
  Science}, pages 82--93. Springer, 2010.

\bibitem{DBLP:journals/algorithmica/CohenY14}
K.~Cohen and R.~Yuster.
\newblock On minimum witnesses for boolean matrix multiplication.
\newblock {\em Algorithmica}, 69(2):431--442, 2014.

\bibitem{Cottingham}
R.~W. Cottingham, R.~M. Idury, and A.~A. Sch\"affer.
\newblock Faster sequential genetic linkage computations.
\newblock {\em American Journal of Human Genetics}, 53:252--263, 1993.

\bibitem{DBLP:journals/tcs/CzumajKL07}
A.~Czumaj, M.~Kowaluk, and A.~Lingas.
\newblock Faster algorithms for finding lowest common ancestors in directed
  acyclic graphs.
\newblock {\em Theor. Comput. Sci.}, 380(1-2):37--46, 2007.

\bibitem{DBLP:journals/tcs/DashSHC13}
S.~K. Dash, S.~Scholz, S.~Herhut, and B.~Christianson.
\newblock A scalable approach to computing representative lowest common
  ancestor in directed acyclic graphs.
\newblock {\em Theor. Comput. Sci.}, 513:25--37, 2013.

\bibitem{DBLP:conf/icalp/DuanGZ18}
R.~Duan, Y.~Gu, and L.~Zhang.
\newblock Improved time bounds for all pairs non-decreasing paths in general
  digraphs.
\newblock In {\em {ICALP} 2018}, pages 44:1--44:14.

\bibitem{DBLP:conf/icalp/DuanJW19}
R.~Duan, C.~Jin, and H.~Wu.
\newblock Faster algorithms for all pairs non-decreasing paths problem.
\newblock In {\em {ICALP} 2019}, pages 48:1--48:13.

\bibitem{DBLP:conf/soda/DuanP09}
R.~Duan and S.~Pettie.
\newblock Fast algorithms for (max, min)-matrix multiplication and bottleneck
  shortest paths.
\newblock In {\em {SODA} 2009}, pages 384--391, 2009.

\bibitem{DBLP:conf/soda/DurajK0W20}
L.~Duraj, K.~Kleiner, A.~Polak, and V.~V. Williams.
\newblock Equivalences between triangle and range query problems.
\newblock In {\em {SODA} 2020}, pages 30--47.

\bibitem{DBLP:journals/jgt/ErdosGK91}
P.~Erd{\"{o}}s, J.~G. Gimbel, and D.~Kratsch.
\newblock Some extremal results in cochromatic and dichromatic theory.
\newblock {\em Journal of Graph Theory}, 15(6):579--585, 1991.

\bibitem{DBLP:journals/ipl/FominKN02}
F.~V. Fomin, D.~Kratsch, and J.~Novelli.
\newblock Approximating minimum cocolorings.
\newblock {\em Inf. Process. Lett.}, 84(5):285--290, 2002.

\bibitem{DBLP:journals/siamcomp/HarelT84}
D.~Harel and R.~E. Tarjan.
\newblock Fast algorithms for finding nearest common ancestors.
\newblock {\em {SIAM} J. Comput.}, 13(2):338--355, 1984.

\bibitem{DBLP:conf/icalp/IndykLLP04}
P.~Indyk, M.~Lewenstein, O.~Lipsky, and E.~Porat.
\newblock Closest pair problems in very high dimensions.
\newblock In {\em {ICALP} 2004}, pages 782--792.

\bibitem{DBLP:conf/focs/JorgensenP14}
A.~G. J{\o}rgensen and S.~Pettie.
\newblock Threesomes, degenerates, and love triangles.
\newblock In {\em {FOCS} 2014}, pages 621--630. {IEEE} Computer Society.

\bibitem{DBLP:conf/icalp/KowalukL05}
M.~Kowaluk and A.~Lingas.
\newblock {LCA} queries in directed acyclic graphs.
\newblock In {\em {ICALP} 2005}, pages 241--248.

\bibitem{DBLP:conf/esa/KowalukL07}
M.~Kowaluk and A.~Lingas.
\newblock Unique lowest common ancestors in dags are almost as easy as matrix
  multiplication.
\newblock In {\em {ESA} 2007}, pages 265--274.

\bibitem{DBLP:journals/corr/abs-2004-14064}
M.~Kowaluk and A.~Lingas.
\newblock Quantum and approximation algorithms for maximum witnesses of boolean
  matrix products.
\newblock {\em CoRR}, abs/2004.14064, 2020.

\bibitem{DBLP:conf/swat/KowalukLN08}
M.~Kowaluk, A.~Lingas, and J.~Nowak.
\newblock A path cover technique for {LCA}s in dags.
\newblock In {\em {SWAT} 2008}, pages 222--233.

\bibitem{GLU:2018}
K.~Labib, P.~Uznański, and D.~Wolleb{-}Graf.
\newblock Hamming distance completeness.
\newblock In {\em {CPM} 2019}, pages 14:1--14:17.

\bibitem{DBLP:conf/issac/Gall14}
F.~{Le Gall}.
\newblock Algebraic complexity theory and matrix multiplication.
\newblock In {\em {ISSAC} 2014}, page~23.

\bibitem{DBLP:conf/soda/GallU18}
F.~{Le Gall} and F.~Urrutia.
\newblock Improved rectangular matrix multiplication using powers of the
  {C}oppersmith-{W}inograd tensor.
\newblock In {\em {SODA} 2018}, pages 1029--1046.

\bibitem{lesniak1977cochromatic}
L.~Lesniak and H.~J. Straight.
\newblock The cochromatic number of a graph.
\newblock {\em Ars Combinatoria}, 3:39--46, 1977.

\bibitem{DBLP:conf/innovations/Lincoln0W20}
A.~Lincoln, A.~Polak, and V.~V. Williams.
\newblock Monochromatic triangles, intermediate matrix products, and
  convolutions.
\newblock In {\em {ITCS} 2020}, pages 53:1--53:18.

\bibitem{DBLP:conf/cocoon/MinKZ09}
K.~Min, M.~Kao, and H.~Zhu.
\newblock The closest pair problem under the hamming metric.
\newblock In {\em {COCOON} 2009}, pages 205--214.

\bibitem{mirsky1971dual}
L.~Mirsky.
\newblock A dual of dilworth's decomposition theorem.
\newblock {\em The American Mathematical Monthly}, 78(8):876--877, 1971.

\bibitem{Shaffer}
A.~A. Sch\"affer, S.~K. Gupta, K.~Shriram, and R.~W. Cottingham.
\newblock Avoiding recomputation in linkage analysis.
\newblock {\em Human Heredity}, 44:225–237, 1994.

\bibitem{DBLP:conf/soda/ShapiraYZ07}
A.~Shapira, R.~Yuster, and U.~Zwick.
\newblock All-pairs bottleneck paths in vertex weighted graphs.
\newblock In {\em {SODA} 2007}, pages 978--985.

\bibitem{stothers2010complexity}
A.~J. Stothers.
\newblock On the complexity of matrix multiplication.
\newblock 2010.

\bibitem{DBLP:conf/stoc/VassilevskaWY07}
V.~Vassilevska, R.~Williams, and R.~Yuster.
\newblock All-pairs bottleneck paths for general graphs in truly sub-cubic
  time.
\newblock In {\em {STOC} 2007}, pages 585--589.

\bibitem{DBLP:journals/eik/Wagner84}
K.~W. Wagner.
\newblock Monotonic coverings of finite sets.
\newblock {\em J. Inf. Process. Cybern.}, 20(12):633--639, 1984.

\bibitem{DBLP:conf/stoc/Williams12}
V.~V. Williams.
\newblock Multiplying matrices faster than coppersmith-winograd.
\newblock In {\em {STOC} 2012}, pages 887--898.

\bibitem{DBLP:journals/talg/Williams10}
V.~V. Williams.
\newblock Nondecreasing paths in a weighted graph or: How to optimally read a
  train schedule.
\newblock {\em {ACM} Trans. Algorithms}, 6(4):70:1--70:24, 2010.

\bibitem{DBLP:journals/siamcomp/WilliamsW13}
V.~V. Williams and R.~Williams.
\newblock Finding, minimizing, and counting weighted subgraphs.
\newblock {\em {SIAM} J. Comput.}, 42(3):831--854, 2013.

\bibitem{DBLP:conf/tamc/YangCLZ07}
B.~Yang, J.~Chen, E.~Lu, and S.~Q. Zheng.
\newblock A comparative study of efficient algorithms for partitioning a
  sequence into monotone subsequences.
\newblock In {\em {TAMC} 2007}, volume 4484 of {\em Lecture Notes in Computer
  Science}, pages 46--57.

\bibitem{DBLP:conf/soda/Yuster09}
R.~Yuster.
\newblock Efficient algorithms on sets of permutations, dominance, and
  real-weighted {APSP}.
\newblock In {\em {SODA} 2009}, pages 950--957.

\bibitem{DBLP:journals/talg/YusterZ05}
R.~Yuster and U.~Zwick.
\newblock Fast sparse matrix multiplication.
\newblock {\em {ACM} Trans. Algorithms}, 1(1):2--13, 2005.

\bibitem{DBLP:journals/jacm/Zwick02}
U.~Zwick.
\newblock All pairs shortest paths using bridging sets and rectangular matrix
  multiplication.
\newblock {\em J. {ACM}}, 49(3):289--317, 2002.

\end{thebibliography}
\end{document}